\newcommand{\R}{\mathbb{R}}
\newtheorem{thm}{Theorem}[section]
\newtheorem{lem}{Lemma}[section]
\newtheorem{definition}{Definition}[section]
\begin{document}
\title[Inverse diffraction for {\em{SDO/AIA}}]
{Inverse diffraction for the {\em{Atmospheric Imaging Assembly}} in the {\em{Solar Dynamics Observatory}} }
\author{G Torre\dag, R A Schwartz\ddag, F Benvenuto\S, A M Massone\$ and M Piana\dag \$}
\address{\dag Dipartimento di Matematica, Universit\`a di Genova, via Dodecaneso 35, 16146 Genova, Italy \\
\ddag Catholic University of America and NASA Goddard Space Flight Center, Greenbelt, MD 20771, USA \\ 
\S Ecole Polytechnique, CMAP, Route de Saclay, 91128 Palaiseau Cedex, France \\
\$ CNR - SPIN Genova, via Dodecaneso 33, 16146 Genova, Italy}
\ead{piana@dima.unige.it}

\begin{abstract}
The {\em{Atmospheric Imaging Assembly}} in the {\em{Solar Dynamics Observatory}} provides full Sun images every $12$ seconds in each of $7$  Extreme Ultraviolet passbands. However, for a significant amount of these images, saturation affects their most intense core, preventing scientists from a full exploitation of their physical meaning. In this paper we describe a mathematical and automatic procedure for the recovery of information in the primary saturation region based on a correlation/inversion analysis of the diffraction pattern associated to the telescope observations. Further, we suggest an interpolation-based method for determining the image background that allows the recovery of information also in the region of secondary saturation (blooming).  
\end{abstract}

\maketitle

\section{Introduction}

The {\em{Solar Dynamics Observatory (SDO)}} \cite{pethch12} is a solar satellite launched by NASA on February 11 2010. The scientific goal of this mission is a better understanding of how the solar magnetic field is generated and structured and how solar magnetic energy is stored and released into the helio- and geo-sphere, thus influencing space weather. {\em{SDO}} contains a suite of three instruments: 
\begin{itemize}
\item The {\em{Helioseismic and Magnetic Imager (SDO/HMI)}} \cite{scetal12} has been designed to study oscillations and the magnetic field at the solar photosphere. 
\item The {\em{Atmospheric Imaging Assembly (SDO/AIA)}} \cite{leetal12} is made of four telescopes, providing ten full-Sun images every twelve seconds, twenty four hours a day, seven days a week. 
\item The {\em{Extreme Ultraviolet Variability Experiment (SDO/EVE)}} \cite{woetal12} measures the solar extreme ultraviolet (EUV) irradiance with unprecedented spectral resolution, temporal cadence, accuracy, and precision.
\end{itemize}

The present paper deals with an important aspect of the image reconstruction problem for {\em{SDO/AIA}} \cite{gretal11,boetal12,poetal13}. The four telescopes of such instrument capture images of the Sun's atmosphere in ten separate wave bands, seven of which centered at EUV wavelengths. Each image is a $4096 \times 4096$ square array with pixel width in the range $0.6-1.5$ arcsec and is acquired according to a standard CCD-based imaging technique. In fact, each {\em{AIA}} telescope utilizes a $16$-megapixel CCD divided into four $2048 \times 2048$ quadrants. As typically happens in this kind of imaging, {\em{AIA}} CCDs are affected by primary saturation and blooming, which degrade both quantitatively and qualitatively the {\em{AIA}} imaging properties. {\em{Primary saturation}} \cite{makl97} refers to the condition where a set of pixel cells reaches the Full Well Capacity, i.e. these pixels store the maximum number possible of photon-induced electrons. At saturation, pixels lose their ability to accommodate additional charge, which therefore spreads into neighboring pixels, causing either erroneous measurements or second-order saturation. Such spread of charge is named {\em{blooming}} \cite{makl97} and typically shows up as a bright artifact along a privileged axis in the image. Figure \ref{fig:saturation-blooming} shows a notable example of combined saturation and blooming effects in an {\em{SDO/AIA}} image captured during the September 6, 2011 event. The recovery of information in the primary saturation region by means of an inverse diffraction procedure is the main goal of the present paper. Further, we also introduce here an interpolation approach that allows a robust estimate of the background in the diffraction region as well as reasonable estimate of the flux in the central blooming region.
%\begin{figure}[!h]
%\begin{center}
%\includegraphics[width=8.cm]{sat-bloom}
%\caption{{\em{SDO/AIA}} image of the September 6 2011 event, captured by means of the $131 \AA$ passband at 22:19:09 UT for $\sim 2.9$ sec exposure duration. The left panel shows the position of the explosion on the full disk of the Sun. The right panel clearly illustrates the presence of primary saturation, blooming and diffraction fringes.}
%\label{fig:saturation-blooming}
%\end{center}
%\end{figure}
%
%
\begin{figure}[pht]
\begin{center}
\begin{tabular}{cc}
\includegraphics[width=7.cm]{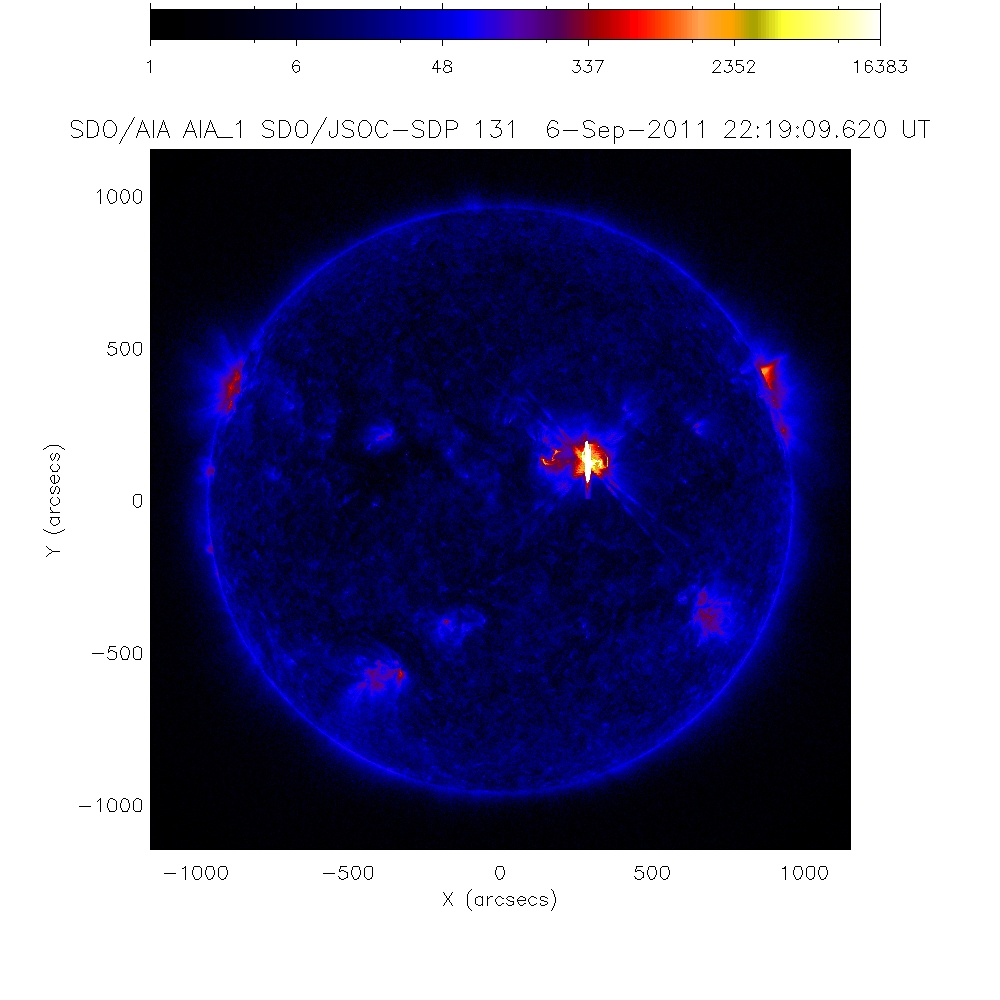} & 
\includegraphics[width=7.cm]{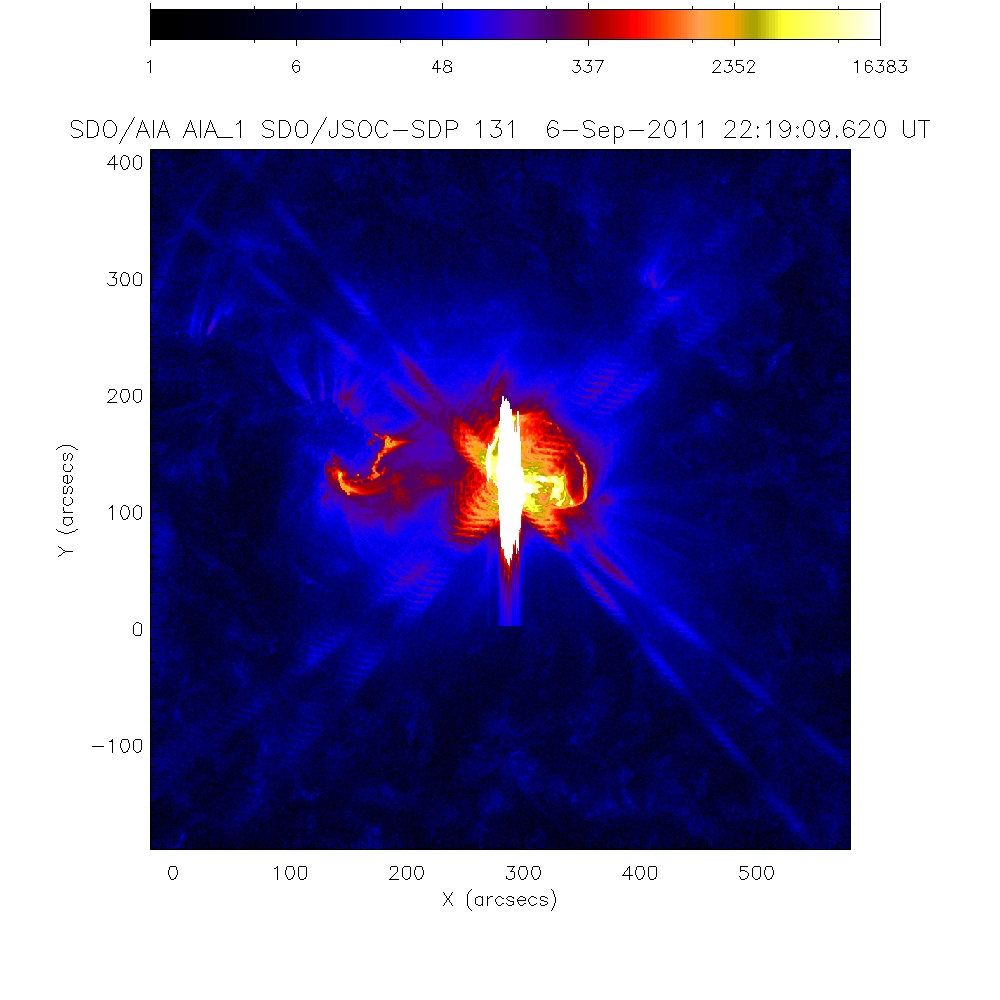} \\
\end{tabular}
%\label{fig:simulated-3}
\caption{{\em{SDO/AIA}} image of the September 6 2011 event, captured by means of the $131 \AA$ passband at 22:19:09 UT for $\sim 2.9$ sec exposure duration. The left panel shows the position of the explosion on the full disk of the Sun. The zoom in the right panel clearly illustrates the presence of primary saturation, blooming and diffraction fringes.}
\label{fig:saturation-blooming}
\end{center}
\end{figure}

The optical setup of each {\em{AIA}} telescope is characterized by structures with uniform wire meshes used to support the thin filters that create the EUV passbands. The interaction between the incoming EUV radiation and the grids generates a diffraction effect \cite{beba78} depending on the source intensity $I_0$. This effect can be easily observed when $I_0$ is so large that the diffracted flux is comparable to the background level.
%
%described by \cite{beba78}
%\begin{equation}\label{diffraction}
%I(m) = I_0 \left( \frac{\sin(m \pi b/a)}{m \pi b/a}\right)^2~,
%\end{equation}
%where $I_0$ is the source intensity, $m$ is the diffraction order, $a$ is the distance between a point in the grid support and the corresponding point in the next grid support and $b$ is the spacing between grid supports (if $c$ is the grid width, we have $a=b+c$). The effects of diffraction can be easily observed when $I_0$ assumes values sufficiently big, i.e. such that $I(m)$ is greater than the background flux for some $m$ values (and this often occurs for even modest flares). 
When this situation occurs, it often happens that the incoming flux generates a signal that exceeds the saturation level of the {\em{AIA}} CCDs, which is $16383$ DN pixel$^{-1}$ (where DN stands for Data Number). This fact has a very interesting mathematical implication: all information on the radiation flux which is lost due to primary saturation is actually present in the diffraction pattern and therefore the signal in the primary saturation region can in principle be restored by solving the inverse diffraction problem (we point out that pixels contained in the blooming region generate diffraction effects that are negligible with respect to the ones corresponding to the primary saturation region).

The present paper addresses the de-saturation problem for {\em{SDO/AIA}} having been inspired by the heuristic and semi-automatic approaches described in \cite{rakrli11} and \cite{sctopi14}, respectively. More specifically, we describe here a fully automatic numerical method that utilizes correlation and statistical regularization to recover the image information in the primary saturation region and applies interpolation to ameliorate the effects of blooming. The first pillar of our approach is the definition of a forward model for {\em{SDO/AIA}} data formation. In fact the point spread function (PSF) of each passband can be approximated as the sum of a core PSF, which is modeled by a two-dimensional Gaussian function, and a diffraction PSF, that describes the diffraction pattern corresponding to a point source. Therefore the forward model is encoded by the linear integral operator whose integral kernel is the sum of the two PSFs. However, the difficult issue here is to define the domain of the diffraction PSF, i.e. to automatically segment the primary saturation region with respect to the blooming one. We solved this problem by combining correlation and thresholding. Once the forward operator is defined, the second pillar of our approach performs image reconstruction by means of an Expectation Maximization (EM) algorithm \cite{shva82} applied to the inverse diffraction problem. EM is a maximum likelihood technique working when the measured data are affected by Poisson noise and the solution to reconstruct is non-negative. We note that {\em{AIA}} data are only approximately Poisson, since the system only records the charge and then divides it by the average charge per photon to get the Data Number. On the other hand, the source distribution in the saturated region, is certainly positive. Further, in this paper we will take advantage of a very effective, recently introduced stopping rule for EM, which guarantees the right amount of regularization by means of a criterion with solid statistical basis \cite{bepi14}. The effectiveness of this de-saturation method is verified against synthetic data and by reconstructing the source distribution in the saturated regions of {\em{SDO/AIA}} maps of the September 6 2011 event, acquired at different time points.

The plan of the paper is as follows. Section 2 models the forward problem. Section 3 describes the image reconstruction method utilized for the solution of the inverse diffraction problem. Section 4 performs a numerical validation of the de-saturation approach in the case of synthetic data mimicking the {\em{SDO/AIA}} signal formation process. An example of how the method works in the case of real data is described in Section 5. Finally, our conclusions are offered in Section 6 while an appendix illustrates the way we estimate the background and recover information in the blooming region.

\section{The forward problem}
As shown in the Introduction, during many observations an {\em{SDO/AIA}} image presents a rather complex structure. Using a lexicographic order for the image pixels and if $I$ with size $N$ is one of such images, then in $I$ it is possible to point out five different sets of pixels:
\begin{enumerate}
\item The set of saturated pixels
\begin{equation}\label{set-saturated}
S^{\prime} = \{ i \in {\mathcal{N}},~ I_i = 16383~ {\mbox{DN pixel}}^{-1}\}~,
\end{equation}
where ${\mathcal{N}}$ is the set of natural numbers ranging from 1 to $N$.
\item The subset $S \subset S^{\prime}$ of saturated pixels which are affected by primary saturation.
\item The subset $B \subset S^{\prime}$ of saturated pixels which are affected by blooming. Of course we have that 
\begin{equation}\label{union}
S^{\prime} = S \cup B~~~~~~~S \cap B = \emptyset~.
\end{equation}
\item The set of pixels $F$ such that  $F \cap S^{\prime} = \emptyset$ and where the diffraction fringes occur.
\item The complement of $S^{\prime} \cup F$ (as we will see this set does not play any role in the de-saturation process).
\end{enumerate}

In general, the data formation process in {\em{AIA}}, which generates the image $I$, is the result of the discretization of the convolution between the telescope PSF and the incoming photon flux. In a finite-dimension setting, we can introduce the matrix
\begin{equation}\label{psf}
A = A_D + A_C~,
\end{equation}
which is the sum of the two $N \times N$ circulant matrices $A_D$, associated to the diffraction component of the {\em{AIA}} PSF, and $A_C$, associated to the diffusion component. Therefore the image $I$ is given by
\begin{equation}\label{psf-1}
I = A {\tilde{x}} = A_D {\tilde{x}} + A_C {\tilde{x}}~,
\end{equation}
where ${\tilde{x}}$ is the vector of dimension $N$ obtained by discretizing the incoming photon flux. We now define the sub-matrix $A_D^S: \R^{\#S} \rightarrow \R^{\#F}$ of $A_D$ that maps the vector $x$ of the values of the photon flux coming just from $S$ onto the vector made of the diffraction fringes; here $\#S$ and $\#F$ are the cardinality of $S$ and $F$, respectively. Since the diffraction effects are negligible for pixels outside the core $S$, we have that 
the diffraction pattern $I_F = \{ I_i ~, ~ i \in F\}$ is approximated by the matrix times vector product
\begin{equation}\label{cyclic}
I_F = A_D^S x + BG_F~,
\end{equation}
where $BG := A_C {\tilde{x}}$ is the total background and $BG_F := (A_C{\tilde{x}})_F$ is its restriction onto $F$.
%In (\ref{cyclic}), $A_D^S: \R^{\#S} \rightarrow \R^{\#F}$ is the sub-matrix of $A_D$ that maps the vector of the values of the photon flux coming from $S$ onto the vector made of the diffraction fringes; $x$ is the vector of the values of the photon flux coming from $S$, which has dimension $\#S$, i.e. the cardinality of $S$; and, finally, $BG_F$ is the vector containing the background values in $F$, with dimension $\#F$, i.e. the cardinality of $F$. 
This equation represents the forward model for {\em{AIA}} imaging we are interested in in this paper, and is completely defined once $S$, $F$ and $BG_F$ are explicitly estimated. We point out that this model neglects the diffraction of the background region on itself as well as the diffraction of the bloomed region. It follows that, in this context, the overall background $BG$ is the image deprived by the diffraction effects.

We first observe that $F$ is directly related to $S$. In fact, if $S$ is known, then 
\begin{equation}\label{effe}
F= \{ i \in {\mathcal{N}} \setminus S^{\prime}~,~(A_D \chi_S)_i \neq 0\}~,
\end{equation}
where $\chi_S$ is a vector with size $N$ whose components are $1$ in $S$ and zero elsewhere. Equation (\ref{effe}) points out the set of pixels outside the saturation region $S^{\prime}$, illuminated by the diffraction pattern produced by point sources located in the primary saturation region $S$.
On the other hand, if $\chi_{S^{\prime}}$ is a vector with size $N$ whose components are $1$ in $S^{\prime}$ and zero elsewhere,
\begin{equation}\label{effeprimo}
F^{\prime}= \{ i \in {\mathcal{N}} \setminus S^{\prime}~,~(A_D \chi_{S^{\prime}})_i \neq 0\},
\end{equation}
would point out the set of pixels outside $S^{\prime}$ illuminated by the diffraction pattern produced by point sources located in the overall saturation region $S^{\prime}$.
Based on the observation that the diffraction effects associated to blooming are negligible with respect to the ones associated to primary saturation, equations (\ref{effe}) and (\ref{effeprimo}) suggest that the segmentation of $S$ in $S^{\prime}$ can be obtained by a simple correlation analysis. In fact,
%In fact, we expect that the correlation between the diffraction fringes and the diffraction PSF is significant when $A_D$ is centered at points in $S$. 
let us consider first the ideal case where no background affects the fringe data $I_F$; then the correlation vector is the back-projection
\begin{equation}\label{correlation-1}
C = (A_{D}^{S^{\prime}})^T I_{F^{\prime}},
\end{equation}
where 
%\begin{equation}\label{effeprimo}
%F^{\prime}= \{ i \in {\mathcal{N}}~,~(A_D \chi_{S^{\prime}})(i) \neq 0\},
%\end{equation}
$I_{F^{\prime}} = \{ I_i ~, ~ i \in F^{\prime}\}$ and $ (A_D^{S^{\prime}})^T$ is the transpose of the matrix $A_{D}^{S^{\prime}}$, i.e. the sub-matrix of $A_D$ that maps the vector of the photon flux coming from  $S^{\prime}$ on the image data in $F^{\prime}$ (the size of $A_{D}^{S^{\prime}}$ is $\#F^{\prime} \times \#S^{\prime}$). Given $C$ in (\ref{correlation-1}), $S$ would be identified by the positions associated to the components of $C$ larger than $16383$ DN pixel$^{-1}$. We now observe that, if $A_{D}^{S^{\prime}}$ is normalized as
\begin{equation}\label{normalization}
\sum_{i=1}^{\#F^{\prime}} (A_{D}^{S^{\prime}})_{ij} = 1~~~~ j=1,\ldots,\#S^{\prime},
\end{equation}
then equation (\ref{correlation-1}) can be interpreted as the first iteration of Expectation-Maximization (EM) in absence of background and with initialization given by a unit vector. Therefore, in order to account for the presence of background, we generalize the computation of the correlation by using 
\begin{equation}\label{correlation-2}
C^{(1)} = C^{(0)} \cdot (A_{D}^{S^{\prime}})^T \frac{I_{F^{\prime}}}{A_{D}^{S^{\prime}}C^{(0)} + BG_{F^{\prime}}} ~~,
\end{equation}
which is the first EM iteration when $C^{(0)}$ is a generic initialization vector, $A_{D}^{S^{\prime}}$ is not normalized as in (\ref{normalization}) and the vector $BG_{F^{\prime}}$ containing the background 
values in $F^{\prime}$  is included (we notice that in (\ref{correlation-2}) and from now on, the symbol $\cdot$ and the fraction should be intended as element-wise). It follows that, in order to explicitly compute $C^{(1)}$ in (\ref{correlation-2}) we need to estimate the background vector $BG_{F^{\prime}}$ and to select $C^{(0)}$. In the 
Appendix we show a simple way to estimate the background on the whole image. From it a reasonable choice for the initialization is $C^{(0)} = BG_{S^{\prime}}$, i.e. the background values in $S^{\prime}$. Once computed $C^{(1)}$ via (\ref{correlation-2}), the primarily saturated region $S$ is given by the positions of $C^{(1)}$ whose corresponding components are more intense than $16383$, the corresponding $F$ is given by equation (\ref{effe}) and the forward model (\ref{cyclic}) is completely defined.

\section{Inversion method for de-saturation}
The identification of the saturation region by means of correlation allows the definition of the {\em{SDO/AIA}} de-saturation problem as the linear inverse problem of determining $x$ from the measured data $I_F$ and an estimate $BG_F$ of the background in $F$, when $x$, $I_F$ and $BG_F$ are related by (\ref{cyclic}).
Since the noise affecting the measured data $I_F$ has an approximate Poisson nature, the likelihood function, i.e. the probability of obtaining the realization $I_F$ of the data random vector given the realization $x$ of the solution random vector can be written as
\begin{equation}\label{poisson}
p(I_F|x) = \prod_{ i=1}^{\#F} \frac{e^{-(A_D^Sx + BG_F)_i}}{(I_F)_i !} (A_D^Sx + BG_F)_i^{(I_F)_i}.
\end{equation}
A classical statistical approach to the solution of (\ref{cyclic}) considers the constrained Maximum Likelihood problem
\begin{equation}\label{ML-constrained}
\max_x p(I_F|x)~~~~~|~~~~~x \geq 0.
\end{equation}
Expectation Maximization (EM) solves this problem iteratively by means of \cite{shva82}
\begin{equation}\label{EM}
x^{(k+1)} = x^{(k)} \cdot (A_D^S)^T \left(\frac{I_F}{A_D^Sx^{(k)} + BG_F}\right)~,
\end{equation}
where an appropriate stopping rule introduces a regularization effect. In this application we utilize the same statistics-based stopping criterion introduced in \cite{bepi14} and successfully applied to the reconstruction of RHESSI images \cite{beetal13}. This approach is based on the observation that in the case of Poisson noise, standard regularization, referring to a specific data vector and concerning the point-wise convergence of a one-parameter family of regularizing operators, cannot be applied. Therefore a new definition of asymptotic regularization must be introduced, in order to account for the fact that in the Poisson case, convergence must hold when the signal-to-noise ratio of the data grows up to infinity. More formally:

\begin{definition}
\label{asym_reg_def}
An operator $R: \R^{\#F}_+ \subset \R^{\#F} \rightarrow \R^{\#S}$ is called asymptotically regularizable on a cone $\mathcal C \subset \R^{\#F}_+$ if there exists a family of continuous operators
\begin{equation}
R^{(k)} : \mathcal C \to \mathcal \R^{\#S}
\end{equation}
with $k \in \mathbb N$ (or $\mathbb R$) and 
a parameter choice rule
\begin{equation}
k: \mathbb R_+ \times \mathcal C \to \mathbb N
\end{equation}
such that, for $\delta>0$ and $I_F , I_F^\delta \in \mathcal C$
\begin{equation}
\lim_{L \to \infty} \sup \{ \| R^{(k(\delta,I_F^\delta))} (\bar I_F^\delta) - R( \bar I_F) \| ~|~ 
\|I_F^\delta - I_F\| \leq \delta ~,~ \|I_F\| > L~
\} = 0 ~ ,
\label{eq:poisson_reg_def}
\end{equation}
where $L>0$, $\bar I_F^\delta=I_F^\delta/\|I_F^\delta\|$, $\bar I_F=I_F/\|I_F\|$, and $\| \cdot \|$ is the Euclidean norm. %{\bf{(FEDE: GIUSTO?)}}
Such a pair $(\{R^{(k)}\},k)$ is called an asymptotic regularization method for $R$ in $\mathcal C$.
\end{definition}

We notice that the EM algorithm can be thought as a family of operators $R^{(k)}$ by taking the concatenation of the first $k$ iterations $x^{(k)}$ and by thinking it as a function of the entry data $I_F$, i.e.
\begin{equation}
R^{(k)}(I_F,BG_F) = \underbrace{\psi_{(I_F,BG_F)} \circ \ldots \circ \psi_{(I_F,BG_F)}}_{k~\mathrm{times}} ({\bf{1}})
\end{equation}
where $\psi_{(I_F,BG_F)}$ is the EM iteration $x^{(k+1)} = \psi_{(I_F,BG_F)}(x^{(k)})$ defined in equation (\ref{EM}) and ${\bf{1}}$ the unit vector.
Since the EM algorithm is convergent we can define the limit operator with no background, i.e
\begin{equation}\label{EM-limit}
R(I_F) := \lim_{k\to\infty} R^{(k)}(I_F,0)
\end{equation}
for every $I_F \in \mathcal C$ \cite{depierro}. Now we prove that, in the presence of a given background $BG_F$, the EM algorithm is an asymptotic regularization for its limit operator $R$ when using the following stopping rule \cite{bepi14}.
\begin{definition} We call KL-KKT (Kullback Leibler - Karush Khun Tucker) stopping rule the function
\label{stoprule}
\begin{equation}\label{stoppingrule}
k(\delta,I_F^\delta) := \inf \{ k \in \mathbb N ~|~ P^{(k)}(I_F^\delta,BG_F) \leq \tau Q^{(k)}(I_F^\delta,BG_F) \} 
\end{equation}
with $\tau >0$,
\begin{equation}\label{pikappa}
P^{(k)}(I_F^\delta,BG_F) := \left\| x^{(k)} \cdot (A_D^S)^T \left( {\mathbf{1}} - \frac{I^{\delta}_{F}}{A_D^S x^{(k)}+BG_F} \right) \right\|^2_2 
\end{equation}
and
\begin{equation}
Q^{(k)}(I_F^\delta,BG_F) := \sum_{i=1}^{\#F} \left(\frac{(A_D^S)^2 (x^{(k)})^{2}}{A_D^S x^{(k)}+BG_F} \right)_i ~,
\end{equation}
where division by vectors, $(A_D^S)^2$ and $(x^{(k)})^{2}$ indicate component-wise operations and $x^{(k)}=x^{(k)}(I_F^\delta,BG_F)$ as in (\ref{EM}).
%{\bf{(FEDE: \`E GIUSTO $I^{\delta}_F$ AL SECONDO MEMBRO DELLA (\ref{pikappa})?)}}
\end{definition}

%We start by giving a sufficient condition for a method to be an asymptotic regularization.
We now give four technical but easy lemmas.
% \begin{lem}
% \label{sufficient}
% Let us suppose that $R_k$ is a family of continuous operator convergent to $R$ when $k$ tends to infinity. If the stopping rule $k(\delta, I_F^\delta)$ satisfies the condition
% \begin{equation}
% \label{eq:stoprule}
% \lim_{L \to \infty} \inf \{ k(\delta,I_F^\delta) ~|~ 
% \|I_F^\delta - I_F\| \leq \delta ~,~ \|I_F\| > L~
% \} = \infty ~~,
% \end{equation}
% then the method $\{(R_k,k)\}$ is an asymptotic regularization for $R$.
% \end{lem}
% \begin{proof}
% Let $I_F^\delta,I_F \in \mathcal C$ be such that $\|I_F^\delta-I_F\| \leq \delta$ for $\delta>0$. Let $\bar I_F^\delta$ and $\bar I_F$ be the normalized vectors of $I_F^\delta$ and $I_F$, respectively. We can use the triangular inequality to write
% \begin{equation}
% \label{eq:homo_triang}
% \| R_k(\bar I_F^\delta) - R(\bar I_F) \| \leq \| R_k(\bar I_F^\delta) - R_k(\bar I_F) \| + 
% \| R_k(\bar I_F) - R(\bar I_F) \| 
% \end{equation}
% As $\|I_F\| \to \infty$, $\bar I_F^\delta$ tends to $\bar I_F$ since $\|\bar I_F^\delta - \bar I_F\| \leq 2\delta / \|I_F\|$, the continuity of the operators $R_k$ ensure that the first term of the r.h.s. of equation (\ref{eq:homo_triang}) tends to $0$ as $\|I_F\| \to \infty$. For the vanishing of the second term, condition (\ref{eq:stoprule}) ensures that $k = k(\delta,I_F^\delta) \to \infty$ when $L\to \infty$ and then, for the convergence property of $R_k$ we get the thesis.
% %{\bf{(FEDE: C'\`E QUALCOSA CHE NON MI CONVINCE QUI. INOLTRE: COSA \`E $\alpha$?)}}
% \end{proof}
\begin{lem}
Let $L>0$ and $\xi \in \mathcal C$. For the EM algorithm $\psi_{(I_F,BG_F)}$ defined in equation (\ref{EM}) the following relations hold true:
\begin{equation}
\psi_{(L \cdot I_F,BG_F)}(\xi) = L ~\psi_{(I_F,BG_F)}(\xi)
\end{equation}
and
\begin{equation}
\psi_{(L \cdot I_F,BG_F)}(L \xi) = L ~\psi_{(I_F,BG_F/L)}(\xi) ~ .
\end{equation}
\end{lem}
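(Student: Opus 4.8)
The plan is to prove both identities by a direct substitution into the definition of the single EM step $\psi_{(I_F,BG_F)}$ given in equation~(\ref{EM}), keeping careful track of how the scalar factor $L$ propagates through the matrix--vector products and the component-wise operations. Written out explicitly, the EM map reads $\psi_{(I_F,BG_F)}(\xi) = \xi \cdot (A_D^S)^T\left( I_F / (A_D^S\xi + BG_F) \right)$, where $\cdot$ and the quotient are understood component-wise. The only ingredients needed are: (i) the linearity of $A_D^S$ and of its transpose $(A_D^S)^T$; (ii) the elementary fact that a scalar factors out of (and into) a component-wise product and a component-wise quotient; and (iii) the identity $A_D^S(L\xi) + BG_F = L\bigl(A_D^S\xi + BG_F/L\bigr)$, i.e. the behaviour of the affine term $A_D^S\xi + BG_F$ under simultaneous rescaling of $\xi$ and $BG_F$.

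For the first relation I would replace $I_F$ by $L\cdot I_F$ while leaving $\xi$ and $BG_F$ unchanged. The denominator $A_D^S\xi + BG_F$ is then untouched, so $L\cdot I_F / (A_D^S\xi + BG_F) = L\,\bigl( I_F / (A_D^S\xi + BG_F) \bigr)$ component-wise; applying $(A_D^S)^T$ and then multiplying component-wise by $\xi$, the factor $L$ comes out in front by linearity and by ingredient (ii), which gives exactly $\psi_{(L\cdot I_F,BG_F)}(\xi) = L\,\psi_{(I_F,BG_F)}(\xi)$.

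For the second relation I would replace $I_F$ by $L\cdot I_F$ and $\xi$ by $L\xi$ simultaneously. Using ingredient (iii), the denominator becomes $A_D^S(L\xi) + BG_F = L\bigl(A_D^S\xi + BG_F/L\bigr)$, so the quotient collapses to $L\cdot I_F / \bigl( L(A_D^S\xi + BG_F/L) \bigr) = I_F / (A_D^S\xi + BG_F/L)$; the remaining leading factor $L$ in front of $\xi$ then survives the component-wise multiplication by $(A_D^S)^T(\,\cdots)$, and collecting terms yields $\psi_{(L\cdot I_F,BG_F)}(L\xi) = L\,\psi_{(I_F,BG_F/L)}(\xi)$.

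There is no genuine obstacle here: the statement is purely computational. The only place where a bit of attention is required is bookkeeping --- performing the rescalings in the correct order and observing that, although $BG_F$ is literally not rescaled, it is \emph{effectively} divided by $L$ in the second identity because it sits inside the denominator next to the rescaled term $A_D^S\xi$. These two identities are precisely the mechanism through which the scale invariance of the Poisson data will feed into the asymptotic regularization argument of Definition~\ref{asym_reg_def}.
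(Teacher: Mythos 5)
Your proposal is correct and follows exactly the route the paper intends: the paper's own proof is simply the one-line remark ``It follows from computations,'' and your explicit substitution into the EM step, using linearity of $(A_D^S)^T$ and the identity $A_D^S(L\xi)+BG_F = L\bigl(A_D^S\xi + BG_F/L\bigr)$, is precisely that computation carried out in full.
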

\begin{proof}
It follows from computations.
\end{proof}
The previous lemma means that, given an initialization, say $x^{(0)}$, with positive components, for $k \geq 2$ the $k$-th EM iteration applied to signal $L \cdot I_F$ with background $BG_F$ is a multiple of the $k$-th iterate applied to $I_F$ with background $BG_F/L$, i.e. $x^{(k)}(L \cdot I_F,BG_F) = L ~ x^{(k)}(I_F,BG_F/L)$. Similar properties hold for the functions $P^{(k)}$ and $Q^{(k)}$ of Definition \ref{stoprule}:
\begin{lem}\label{almost-homogeneity}
For the KL-KKT rule defined above we have
\begin{equation}\label{almost-homogeneity1}
P^{(k)}(L \cdot I_F,BG_F) = L^2 P^{(k)}(I_F,BG_F/L)
\end{equation}
and
\begin{equation}\label{almost-homogeneity2}
Q^{(k)}(L \cdot I_F,BG_F) = L Q^{(k)}(I_F,BG_F/L)
\end{equation}
for $k \geq 2$.
\end{lem}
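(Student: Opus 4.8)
The plan is to reduce both identities to the scaling behaviour of the EM iterates already recorded right after the previous lemma, namely $x^{(k)}(L\cdot I_F,BG_F)=L\,x^{(k)}(I_F,BG_F/L)$ for $k\ge 2$, and then to insert this into the explicit formulas for $P^{(k)}$ and $Q^{(k)}$ in Definition~\ref{stoprule}. The one algebraic observation that makes the substitution work is that, by linearity of $A_D^S$,
\[
A_D^S(L\,\xi)+BG_F \;=\; L\,(A_D^S\xi+BG_F/L),
\]
so a factor $L$ can be pulled simultaneously out of the data $L\cdot I_F$ appearing in the numerators of $P^{(k)}$ and $Q^{(k)}$ and out of the denominator $A_D^S x^{(k)}+BG_F$; the two cancellations are what produce the mismatched powers $L^2$ and $L$.

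First I would handle $Q^{(k)}$, which is the cleaner case. Writing $x^{(k)}:=x^{(k)}(I_F,BG_F/L)$, so that the iterate entering $Q^{(k)}(L\cdot I_F,BG_F)$ equals $L\,x^{(k)}$, the element-wise numerator $(A_D^S)^2(L\,x^{(k)})^2=L^2(A_D^S)^2(x^{(k)})^2$ picks up $L^2$, the denominator $A_D^S(L\,x^{(k)})+BG_F=L(A_D^Sx^{(k)}+BG_F/L)$ picks up $L$, and summing over $i\in F$ leaves the overall factor $L$, i.e. (\ref{almost-homogeneity2}). For $P^{(k)}$ the same bookkeeping is carried out inside the Euclidean norm: the KKT residual $\mathbf 1-L\,I_F/(A_D^S(L\,x^{(k)})+BG_F)$ equals $\mathbf 1-I_F/(A_D^Sx^{(k)}+BG_F/L)$ and is therefore unchanged, and so is its image under $(A_D^S)^T$; the outer prefactor $x^{(k)}\mapsto L\,x^{(k)}$ contributes one factor $L$, and squaring the Euclidean norm turns it into $L^2$, which is (\ref{almost-homogeneity1}).

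Since, as in the companion lemma, the two identities genuinely "follow from computations", the only point that deserves comment is the origin of the restriction $k\ge 2$: it is inherited verbatim from the iterate-scaling relation, and is caused by the EM recursion being started from the fixed unit vector $\mathbf 1$, which is not homogeneous of degree one in $L$. Concretely, $x^{(1)}=\mathbf 1\cdot(A_D^S)^T\!\left(I_F/(A_D^S\mathbf 1+BG_F)\right)$ fails to satisfy $x^{(1)}(L\cdot I_F,BG_F)=L\,x^{(1)}(I_F,BG_F/L)$, so $P^{(1)}$ and $Q^{(1)}$ need not obey the stated identities; from the second iteration on, however, the influence of the initialization is carried by a vector that does rescale correctly (apply the second relation of the previous lemma to the pair $x^{(1)}(L\cdot I_F,BG_F)=L\,x^{(1)}(I_F,BG_F)$ and iterate), and the computation above then goes through for every $k\ge 2$. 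Beyond this careful tracking of where the factors of $L$ enter, no real obstacle remains.
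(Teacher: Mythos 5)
Your proof is correct and is exactly the argument the paper intends: substitute the iterate-scaling relation $x^{(k)}(L\cdot I_F,BG_F)=L\,x^{(k)}(I_F,BG_F/L)$ stated after the preceding lemma into the definitions of $P^{(k)}$ and $Q^{(k)}$, and track the factors of $L$ via $A_D^S(L\xi)+BG_F=L\,(A_D^S\xi+BG_F/L)$. The paper's own proof is a one-line appeal to this substitution (it even mistakenly cites the equations being proved rather than those of the previous lemma), so your explicit bookkeeping, including the account of why the restriction $k\ge 2$ is inherited from the non-homogeneous initialization, simply fills in what the authors left implicit.
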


\begin{proof}
It follows by applying equations (\ref{almost-homogeneity1}) and (\ref{almost-homogeneity2}) to definition (\ref{stoprule}) of $P^{(k)}$ and $Q^{(k)}$.
\end{proof}

The following lemma is a readily generalization of the well-known flux preservation condition for the EM algorithm in presence of positive background.
\begin{lem} \label{Hx-ineq}
The EM method $x^{(k)}(I_F,BG_F)$ as defined in equation \ref{EM} has the following property
\begin{equation}
\sum_{i=1}^{\#F} (A_D^S x^{(k)})_i \leq \sum_{i=1}^{\#F} (I_F)_i
\end{equation}
and the equality holds only if $BG_F=0$.
\end{lem}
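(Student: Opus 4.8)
The plan is to reduce the claimed inequality to the classical flux identity for EM and then inspect when equality can occur. First I would recall that, as is standard for EM and consistent with the normalization (\ref{normalization}) used above, the columns of $A_D^S$ sum to one, so that $\sum_{i=1}^{\#F}(A_D^S y)_i=\sum_{j=1}^{\#S}y_j$ for every $y\in\R^{\#S}$; in particular the assertion is equivalent to the flux bound $\sum_j x^{(k)}_j\le\sum_i (I_F)_i$.

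Next I would expand a single step of (\ref{EM}). Writing
\[
x^{(k)}_j \;=\; x^{(k-1)}_j\sum_{i=1}^{\#F}(A_D^S)_{ij}\,\frac{(I_F)_i}{(A_D^S x^{(k-1)}+BG_F)_i},
\]
summing over $j$, interchanging the two finite sums and using once more that $\sum_j (A_D^S)_{ij}\,x^{(k-1)}_j=(A_D^S x^{(k-1)})_i$, one obtains
\[
\sum_{j=1}^{\#S} x^{(k)}_j \;=\; \sum_{i=1}^{\#F}(I_F)_i\,\frac{(A_D^S x^{(k-1)})_i}{(A_D^S x^{(k-1)}+BG_F)_i}.
\]
Since $BG_F\ge 0$ component-wise, each ratio on the right-hand side lies in $[0,1]$, and the bound $\sum_j x^{(k)}_j\le\sum_i (I_F)_i$ follows at once (the statement being understood for $k\ge1$, since for the initial unit vector no such bound need hold).

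For the equality case I would observe that $\sum_i (I_F)_i-\sum_j x^{(k)}_j=\sum_i (I_F)_i\left(1-\frac{(A_D^S x^{(k-1)})_i}{(A_D^S x^{(k-1)}+BG_F)_i}\right)$ is a sum of non-negative terms, hence vanishes iff each term vanishes; term $i$ vanishes precisely when $(I_F)_i=0$ or $(BG_F)_i=0$, so, under the mild assumption that the fringe data $I_F$ is strictly positive, equality forces $BG_F=0$. I expect the only genuinely delicate point to be the positivity bookkeeping underlying these manipulations: one must check inductively that $x^{(k-1)}$ has strictly positive entries, so that all the denominators $(A_D^S x^{(k-1)}+BG_F)_i$ are positive. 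This propagates from a positive initialization through (\ref{EM}) because each EM multiplier $\sum_i (A_D^S)_{ij}\,(I_F)_i/(A_D^S x^{(k-1)}+BG_F)_i$ is positive whenever $I_F>0$ and $A_D^S$ has no identically-zero row over $F$. The rearrangement itself is routine, so the equality analysis together with this positivity check is really the only step needing attention.
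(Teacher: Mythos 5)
Your proof is correct and follows essentially the same route as the paper's (one-line) proof: write out $x^{(k)}$ via the explicit EM update, sum over components, swap the finite sums, and use the column normalization of $A_D^S$ to reduce the claim to the standard flux identity $\sum_j x^{(k)}_j=\sum_i (I_F)_i\,(A_D^S x^{(k-1)})_i/(A_D^S x^{(k-1)}+BG_F)_i\le\sum_i (I_F)_i$. You are in fact more careful than the paper on two points it leaves implicit --- the need for the normalization $\sum_i (A_D^S)_{ij}=1$ and the strict positivity of $I_F$ required for the equality case to force $BG_F=0$ --- but these are refinements of the same argument, not a different one.
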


\begin{proof}
This follows by replacing $k$ with $k+1$ and  then by using the explicit form of the algorithm (\ref{EM}).
\end{proof}

\begin{lem}
\label{lemmaQ}
The functions $Q^{(k)}$ have positive lower and upper bounds for every pair $(I_F^\delta,BG_F)$.
\end{lem}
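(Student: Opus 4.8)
The plan is to prove the two bounds separately: the upper bound is completely elementary, whereas the lower bound rests on the convergence of the EM iteration together with a compactness argument.

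For the \emph{upper bound}, fix $k$ and $i$ and first dominate the numerator of the $i$-th summand of $Q^{(k)}$. Since $A_D^S$ is a non-negative matrix and the EM iterates of (\ref{EM}) started from $\mathbf{1}$ stay non-negative, all the products $(A_D^S)_{ij}\,x^{(k)}_j$ are non-negative, hence $\sum_{j}\big((A_D^S)_{ij}\big)^2\big(x^{(k)}_j\big)^2\le\big((A_D^S x^{(k)})_i\big)^2$. Because $BG_F>0$, the scalar inequality $b^2/(b+c)\le b$, valid for $b,c\ge 0$, then yields $\big(Q^{(k)}\big)_i\le(A_D^S x^{(k)})_i$ termwise; summing over $i$ and invoking the flux inequality of Lemma \ref{Hx-ineq} gives
\[
Q^{(k)}\ \le\ \sum_{i=1}^{\#F}(A_D^S x^{(k)})_i\ \le\ \sum_{i=1}^{\#F}(I_F^\delta)_i ,
\]
a finite positive constant depending only on $I_F^\delta$, not on $k$.

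For the \emph{lower bound}, I would read $Q^{(k)}$ as the value $Q(x^{(k)})$ of the map $x\mapsto Q(x):=\sum_i\big(\sum_j((A_D^S)_{ij})^2x_j^2\big)\big/\big((A_D^S x)_i+(BG_F)_i\big)$, which is continuous on the non-negative orthant of $\R^{\#S}$ (the denominators never vanish, since $BG_F>0$) and strictly positive off the origin, because $A_D^S$ has no zero column --- every pixel of $S$ produces some diffracted signal in $F$. By Lemma \ref{Hx-ineq} the whole EM orbit lies in $K:=\{x\ge 0:\ \sum_i(A_D^S x)_i\le\sum_i(I_F^\delta)_i\}$, which is compact precisely because each column of $A_D^S$ has positive sum. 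Moreover the EM iteration converges, $x^{(k)}\to x^\ast$, to the maximum-likelihood point \cite{depierro}, and for the fringe data occurring in the de-saturation problem --- where the measured signal exceeds the background, so that the origin is not a Karush--Kuhn--Tucker point of (\ref{poisson}) --- the limit satisfies $x^\ast\ne 0$. Hence $\{x^{(k)}:k\in\N\}\cup\{x^\ast\}$ is a compact subset of $\R^{\#S}\setminus\{0\}$ contained in the non-negative orthant, on which the continuous strictly positive function $Q$ attains a positive minimum; this is the required uniform-in-$k$ lower bound for $Q^{(k)}$.

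The only delicate point, and the expected main obstacle, is keeping the EM orbit uniformly away from the origin. Indeed, two applications of Cauchy--Schwarz (to the numerator and then to the sum over $i$) give $Q^{(k)}\ge\frac{1}{\#S}\,T_k^2/(T_k+\sum_i(BG_F)_i)$ with $T_k:=\sum_i(A_D^S x^{(k)})_i$, so a positive lower bound for $Q^{(k)}$ is equivalent to a uniform positive lower bound for the predicted flux $T_k$; and $T_k\to 0$ exactly in the degenerate case $I_F^\delta=BG_F$. I would therefore make explicit --- or recall from the standing hypotheses defining the cone $\C$ --- that the data processed by the de-saturation scheme are non-degenerate (which in the regime $\|I_F\|\to\infty$ of Definition \ref{asym_reg_def} is automatic), after which the convergence of EM to a non-zero limit closes the argument; alternatively, one may bound the finitely many initial iterates by hand and invoke the positivity of the tail limit.
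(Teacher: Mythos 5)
Your upper bound is the paper's own argument step for step: drop the background from the denominator, use the non-negativity of $A_D^S$ and of the iterates to get $\sum_j ((A_D^S)_{ij})^2 (x^{(k)}_j)^2 \le ((A_D^S x^{(k)})_i)^2$, apply $b^2/(b+c)\le b$, and close with the flux inequality of Lemma \ref{Hx-ineq}. For the lower bound the two proofs genuinely diverge. The paper stays with explicit inequalities: it bounds the denominator above by $\sum_i\bigl((I_F^\delta)_i + \|BG_F\|_\infty\bigr)$ using Lemma \ref{Hx-ineq}, applies norm equivalences, and thereby reduces everything to a positive lower bound on the predicted flux $T_k=\sum_i (A_D^S x^{(k)})_i$, which it then asserts in one line (``$A_D^S x^{(k)}$ cannot tend to $0$ as the KL divergence should tend to infinity''). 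You reach the identical reduction --- your Engel-form Cauchy--Schwarz chain $Q^{(k)}\ge \frac{1}{\#S}\,T_k^2/(T_k+\sum_i (BG_F)_i)$ is in fact cleaner and dimensionally more transparent than the paper's displayed inequality --- but you close it differently, by reading $Q^{(k)}$ as a continuous positive function evaluated along the EM orbit, invoking convergence $x^{(k)}\to x^\ast\ne 0$, and taking the minimum over the compact set formed by the orbit and its limit. Both arguments hinge on exactly the same crux (the orbit must stay away from the origin), and you are more honest about it: with $BG_F>0$ componentwise, the KL divergence remains finite even if $A_D^S x^{(k)}\to 0$, so the paper's one-line justification is not literally correct as stated, whereas your observation that a non-degeneracy condition on the data (the origin not being a KKT point, automatic in the regime $\|I_F\|\to\infty$ relevant to the theorem) is what actually rules out $T_k\to 0$ is the right diagnosis. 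The trade-off: the paper's route yields an explicit constant, which is convenient when the bound is reused for the normalized data $\bar I_F^\delta$ in the main theorem, while your compactness route is non-constructive and additionally leans on convergence of EM \emph{with} background, which the paper only cites (via \cite{depierro}) in the background-free case; if you go this way you should either supply that reference or note that only finitely many iterates precede the tail controlled by $x^\ast$.
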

\begin{proof}
By using standard inequalities between norms and Lemma \ref{Hx-ineq}, we get an upper bound for $Q^{(k)}$, i.e.
\begin{eqnarray}
\label{eq:B_ineq}
Q^{(k)}(I_F^\delta,BG_F) 
& \leq & \sum_{i=1}^{\#F} \left( \frac{ (A_D^S)^2 (x^{(k)} (I_F^\delta,BG_F))^2 }{ A_D^S x^{(k)} (I_F^\delta,BG_F) }  \right)_i \nonumber \\
& \leq & \sum_{i=1}^{\#F} (A_D^S x^{(k)}(I_F^\delta,BG_F))_i \leq \sum_{i=1}^{\#F} (I_F)_i^\delta ~ ,
\end{eqnarray}
and also a lower bound
\begin{eqnarray}
\label{eq:B_ineq}
Q^{(k)}(I_F^\delta,BG_F) & \geq & \frac{\sum_{i=1}^{\#F} \left( (A_D^S)^2 (x^{(k)}(I_F^\delta,BG_F))^2 \right)_i}{\sum_{i=1}^{\#F} ((I_F^\delta)_i + \|BG_F\|_\infty)}   \nonumber \\
& \geq & \frac{\sum_{i=1}^{\#F} \left( A_D^S x^{(k)}(I_F^\delta,BG_F) \right)_i}{\#F \sum_{i=1}^{\#F} ((I_F^\delta)_i + \|BG_F\|_\infty)}   > 0 ~ ,
\end{eqnarray}
since $A_D^S x^{(k)}$ cannot tend to $0$ as the KL divergence should tend to infinity.
\end{proof}

Now we can prove the main

\begin{thm}
If the EM iteration in equation (\ref{EM}) applied to $I_F^\delta$ is stopped at the $k_*$ iterate with $k_* := k(\delta,I_F^\delta)$ according to the KL-KKT criterion (Definition~\ref{stoprule}) then $k_* < \infty$ and $x^{(k_*)}(\bar I_F^\delta,BG_F/\|I_F^\delta\|)$ tends to a solution of problem (\ref{ML-constrained}) with data $\bar I_F$ (with no background) as $\| I_F \| \to \infty$.
If, in addition, the solution is unique, then the method is an asymptotic regularization for its limit operator (\ref{EM-limit}) on the positive cone $\R^{\#F}_+$.
\end{thm}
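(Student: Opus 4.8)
The plan is to transfer the whole analysis onto the unit sphere of $\R^{\#F}_+$ by means of the homogeneity relations just established. Writing $\Lambda:=\|I_F^\delta\|$, so that $I_F^\delta=\Lambda\,\bar I_F^\delta$, I would first note — using the identity $\psi_{(L\cdot I_F,BG_F)}(L\xi)=L\,\psi_{(I_F,BG_F/L)}(\xi)$ of the first lemma, iterated for $k\ge 2$, together with Lemma~\ref{almost-homogeneity} — that running $k$ EM steps on $I_F^\delta$ with background $BG_F$ and dividing by $\Lambda$ gives exactly $x^{(k)}(\bar I_F^\delta,BG_F/\Lambda)$ (for $k\ge 2$), while the KL--KKT test $P^{(k)}(I_F^\delta,BG_F)\le\tau\,Q^{(k)}(I_F^\delta,BG_F)$ is, for $k\ge 2$, equivalent to $P^{(k)}(\bar I_F^\delta,BG_F/\Lambda)\le(\tau/\Lambda)\,Q^{(k)}(\bar I_F^\delta,BG_F/\Lambda)$. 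Thus $x^{(k_*)}(\bar I_F^\delta,BG_F/\Lambda)=x^{(k_*)}(I_F^\delta,BG_F)/\Lambda$, and as $\|I_F\|\to\infty$ — so that $\Lambda\ge\|I_F\|-\delta\to\infty$ — both the effective background $BG_F/\Lambda$ and the effective threshold $\tau/\Lambda$ tend to $0$, whereas $\bar I_F^\delta$ stays on the unit sphere and obeys $\|\bar I_F^\delta-\bar I_F\|\le 2\delta/\|I_F\|\to 0$.

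Second, I would establish $k_*<\infty$ for a fixed pair $(\delta,I_F^\delta)$. The EM sequence $x^{(k)}(I_F^\delta,BG_F)$ converges \cite{depierro,shva82} to a minimizer $x^{\star}$ of the Kullback--Leibler functional $x\mapsto\sum_i\big[(A_D^S x+BG_F)_i-(I_F^\delta)_i\log(A_D^S x+BG_F)_i\big]$ over $x\ge 0$; the limit satisfies the Karush--Kuhn--Tucker stationarity relation $x^{\star}\cdot(A_D^S)^{T}\big(\mathbf{1}-I_F^\delta/(A_D^S x^{\star}+BG_F)\big)=0$, so $P^{(k)}(I_F^\delta,BG_F)\to 0$. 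By Lemma~\ref{lemmaQ} there is $c>0$ with $Q^{(k)}(I_F^\delta,BG_F)\ge c$ for every $k$; hence $P^{(k)}\le\tau Q^{(k)}$ for all large $k$, the infimum in Definition~\ref{stoprule} is over a nonempty set, and $k_*<\infty$.

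Third, I would show $k_*\to\infty$ as $\|I_F\|\to\infty$ and identify the limit. By Lemma~\ref{Hx-ineq} (flux preservation), $Q^{(k)}(\bar I_F^\delta,BG_F/\Lambda)\le\sum_i(\bar I_F^\delta)_i\le\sqrt{\#F}$, so the effective threshold is $\le\tau\sqrt{\#F}/\Lambda\to 0$; on the other hand, for any fixed index $\kappa\ge 2$, $P^{(\kappa)}(\bar I_F^\delta,BG_F/\Lambda)\to P^{(\kappa)}(\bar I_F,0)$ by continuity, and $P^{(\kappa)}(\bar I_F,0)>0$ unless $x^{(\kappa)}(\bar I_F,0)$ is already an EM fixed point (hence, by the convergence theory of EM, already the sought solution — in which case there is nothing to prove). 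Therefore, for $\|I_F\|$ large the test fails at every $k\le\kappa$, i.e.\ $k_*>\kappa$; since $\kappa$ is arbitrary, $k_*\to\infty$. Consequently the stopped iterate has vanishing residual, $P^{(k_*)}(\bar I_F^\delta,BG_F/\Lambda)\le\tau\sqrt{\#F}/\Lambda\to 0$, and is bounded (again by flux preservation, using in addition that $A_D^S x^{(k)}$ stays bounded away from $0$ — as in the proof of Lemma~\ref{lemmaQ} — and that $A_D^S$ has no zero column, which holds by the definition~(\ref{effe}) of $F$). Hence, along any sequence $\|I_F\|\to\infty$, a subsequence of $x^{(k_*)}(\bar I_F^\delta,BG_F/\Lambda)$ converges to some $x^{\infty}\ge 0$, and passing to the limit in the residual identity shows that $x^{\infty}$ satisfies the stationarity relation of the background-free problem with data $\bar I_F$, i.e.\ (by convexity) that $x^{\infty}$ solves (\ref{ML-constrained}) with data $\bar I_F$ and no background. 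This is the first assertion.

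Finally, under the uniqueness hypothesis this minimizer is exactly $R(\bar I_F)$ (cf.\ (\ref{EM-limit})), and $\bar I_F\mapsto R(\bar I_F)$ is then single-valued and continuous on the unit sphere of $\R^{\#F}_+$; a standard compactness-and-contradiction argument — take subsequences with $\bar I_F^\delta\to\bar I_F^{\star}$, invoke continuity of $R$ and the subsequential convergence just obtained — upgrades the pointwise statement to the uniform one, namely that the supremum in~(\ref{eq:poisson_reg_def}) tends to $0$ as $L\to\infty$, so that $(\{R^{(k)}\},k)$ is an asymptotic regularization for $R$ on $\R^{\#F}_+$. The step I expect to be the main obstacle is exactly this passage to the limit in the last two paragraphs: since the EM iteration map is continuous but \emph{not} equicontinuous in the iteration index, one cannot let $k\to\infty$ and $\|I_F\|\to\infty$ independently. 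The argument must instead exploit that along the KL--KKT-stopped trajectory the residual $P^{(k_*)}$ is forced to zero, together with the monotonicity of the KL functional and the two-sided bounds on $Q^{(k)}$ (Lemma~\ref{lemmaQ}), so that the stopped iterate is an asymptotically exact KKT point; converting ``KKT residual $\to 0$'' into ``distance to the unique minimizer $\to 0$'', uniformly over the $\delta$-perturbations and over the unit sphere, is the delicate core of the proof.
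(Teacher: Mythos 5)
Your proposal is correct and follows essentially the same route as the paper: the homogeneity lemmas rescale the stopped iterate and the KL--KKT test onto the unit sphere, the lower bound on $Q^{(k)}$ from Lemma~\ref{lemmaQ} together with $P^{(k)}\to 0$ gives $k_*<\infty$, and the rewritten test forces $P^{(k_*)}(\bar I_F^\delta,BG_F/\|I_F^\delta\|)\leq \tau\,Q^{(k_*)}/\|I_F^\delta\|\to 0$, after which the vanishing KKT residual identifies the limit with the (unique) solution of (\ref{ML-constrained}). The only difference is one of rigor: the paper's proof ends with ``satisfies the KKT conditions as $\|I_F\|$ tends to infinity, which was to be shown,'' whereas you correctly flag --- and partially supply --- the remaining step of converting a vanishing KKT residual into convergence to the minimizer (boundedness of the stopped iterates, subsequence extraction, and the uniformity over $\delta$-perturbations demanded by Definition~\ref{asym_reg_def}), a step the paper leaves implicit.
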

%The family $T_k$ coupled with the parameter choice rule  is an  asymptotic regularization method for $T$ on the positive cone.

\begin{proof}
For the Lemma \ref{lemmaQ} the function $Q^{(k)}$ in the definition (\ref{stoppingrule}) is bounded away from $0$. Moreover the function $P^{(k)}$ tends to $0$ as $k \to \infty$ by construction.
Hence, $k(\delta,I_F^\delta) < \infty$ for each $(\delta,I_F^\delta)$. Let us denote by $k_*:= k(\delta,I_F^\delta)$ the stopping index.
Since, by Lemma \ref{almost-homogeneity}, the condition in the definition (\ref{stoppingrule}) can be written as
\begin{equation} \label{k_*}
\frac{ P^{(k_*)}(\bar I_F^\delta,BG_F/\|I_F^\delta\|) }{Q^{(k_*)}(\bar I_F^\delta,BG_F/\|I_F^\delta\|)}
\leq \frac{\tau}{\|I_F^\delta\|}
\end{equation}
it results that
\begin{equation} \label{k_*}
\lim_{\|I_F\| \to \infty} P^{(k_*)}(\bar I_F^\delta,BG_F/\|I_F^\delta\|) = 0
\end{equation}
where $\|I_F^\delta\| \geq \|I_F\| - \delta$, as $Q^{(k_*)}$ is bounded. This means that $x^{(k_*)}(\bar I_F^\delta,BG_F/\|I_F^\delta\|)$ satisfies the KKT conditions as $||I_F||$ tends to infinity, which was to be shown.

If the solution of problem (\ref{ML-constrained}) with data $(\bar I_F,0)$ is unique, say $\bar x$, then $\bar x = R(\bar I_F)$ where $R(\bar I_F)$ is  the limit operator defined in (\ref{EM-limit}), and $x^{(k_*)}(\bar I_F^\delta,BG_F/\|I_F^\delta\|) \to \bar x$ as $\|I_F\| \to \infty$.

\end{proof}

We point out that EM in (\ref{EM}), together with the KL-KKT stopping rule given in Definition \ref{stoprule},  provides a regularized vector $x$ in the object space of the reconstructed images, while our aim is data de-saturation in the data space, where saturation actually occurs. Therefore our procedure ends with the projection of the KL-KKT EM reconstructed $x$ in order to construct the desaturated image $I_{desat}$. The scheme is:
\begin{equation}\label{nuova}
I_{desat} = \left\{
\begin{array}{ll}
 A_C^S x & {\mbox{in}}~ S \\
 BG_B & {\mbox{in}}~ B \\
I_F - A_D^S x &{\mbox{in}}~ F \\
I & {\mbox{in}} ~ {\mathcal{N}} \setminus (S^{\prime} \cup F)~.
%(I_{new})_ = I_i & i  \notin S \cup B \cup F\\~,
\end{array}
\right.
\end{equation}
In equation (\ref{nuova}), $A_C^S: \R^{\#S} \rightarrow \R^{\#S}$ is the sub-matrix of $A_C$ that maps the vector of the values of $x$ in $S$ onto the corresponding vector in $S$ and $BG_B$ is the restriction to the blooming region $B$ of the background estimated as described in the Appendix.

\section{Numerical validation}
In order to validate the effectiveness of this approach for restoring information in the primary saturation region we consider two simulations that mimic the presence of primary saturation in {\em{SDO/AIA}} data with two different levels of adherence with experimental conditions (the way blooming can be eliminated from experimental images is illustrated in the Appendix and some examples are given in the next Section). 

\begin{table}
\begin{center}
\begin{tabular}{|c|c|c|}
\hline
peak value $[10^4]$  & $\sigma$ $[{\mbox{arcsec}}]$ & position $[{\mbox{arcsec}}]$ \\
\hline\hline
$4$ & $1.5$ & $(-20,-5)$ \\
\hline
$3$ & $2.5$ & $(5,5)$ \\
\hline
$5$ & $1.0$ & $(-2,7)$ \\
\hline
\end{tabular}
\caption{The physical and geometric parameters associated to the synthetic sources used in the first validation test for the inverse diffraction approach.}
\label{table:tab-1}
\end{center}
\end{table}

In the first simulation, the configuration that generates the synthetic data is represented by three two-dimensional symmetric Gaussian functions whose geometrical characteristics are described in Table \ref{table:tab-1}. We added a constant offset to this simulated configuration in order to mimic the presence of background and numerically convolved the resulting image with the global {\em{AIA}} PSF $A$ in equation (\ref{psf}). The resulting blurred and diffracted image (see Figure \ref{fig:simulation-2}, top left panel) was affected by Poisson noise and the result was artificially saturated by setting to $16383$ DN pixel$^{-1}$ all grey levels greater than this value (see Figure \ref{fig:simulation-2}, top right panel). We finally applied our numerical automatic procedure described in Section 3 and obtained the result in Figure \ref{fig:simulation-2}, bottom left panel. Finally, the bottom right panel shows the level of accuracy with which the reconstruction method is able to recover information inside the saturated region of the image (the root mean square error in the saturation region is of the order of $9\%$).

\begin{figure}
\begin{tabular}{cc}
\includegraphics[width=0.5\textwidth]{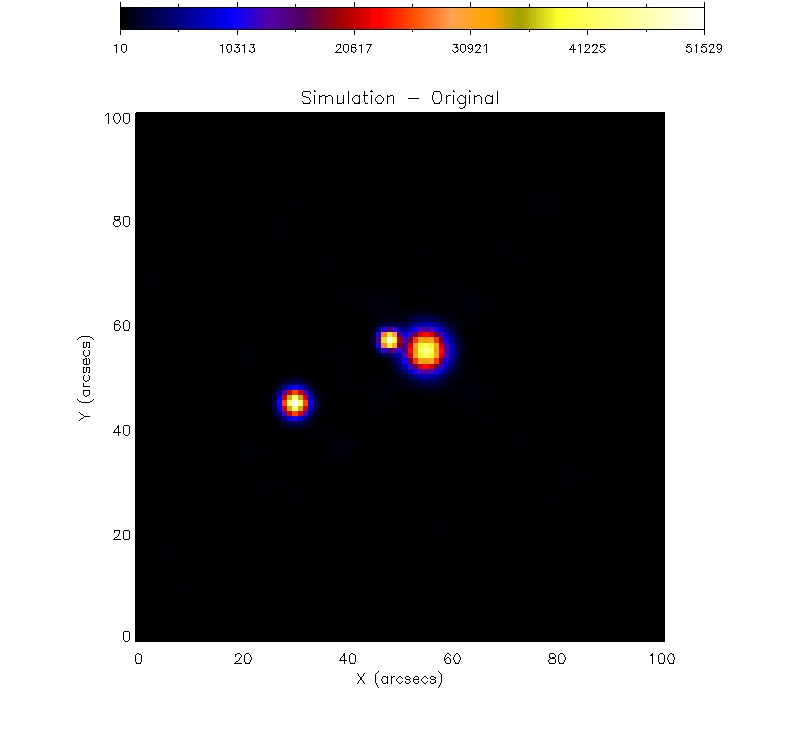}&
\includegraphics[width=0.5\textwidth]{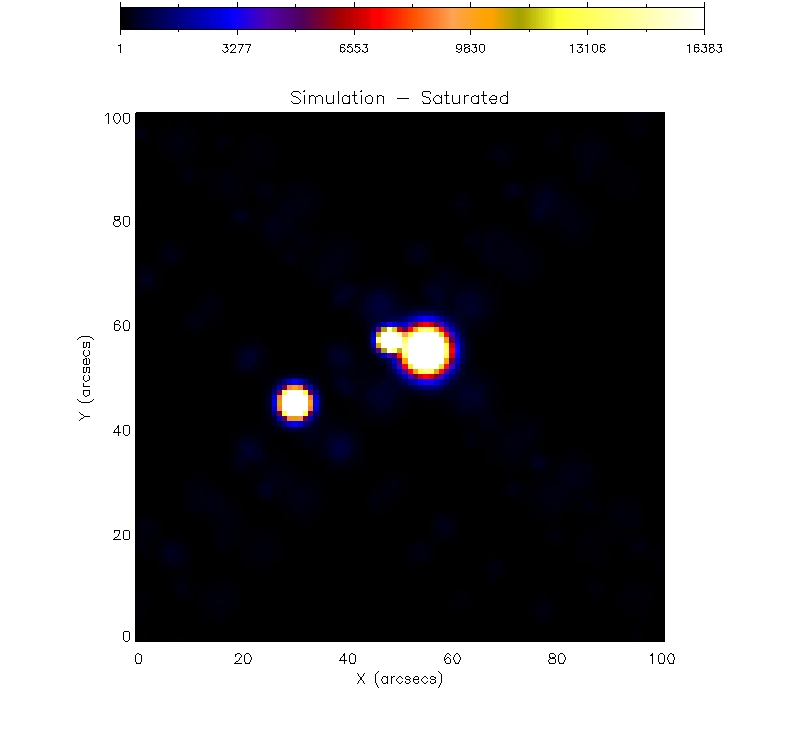}\\
\includegraphics[width=0.5\textwidth]{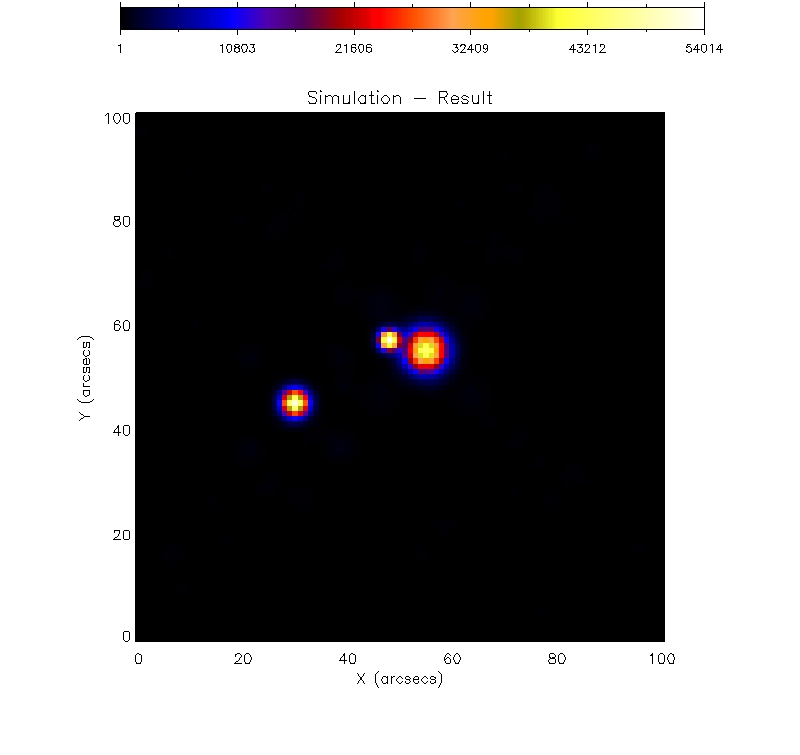}&
\includegraphics[width=0.45\textwidth]{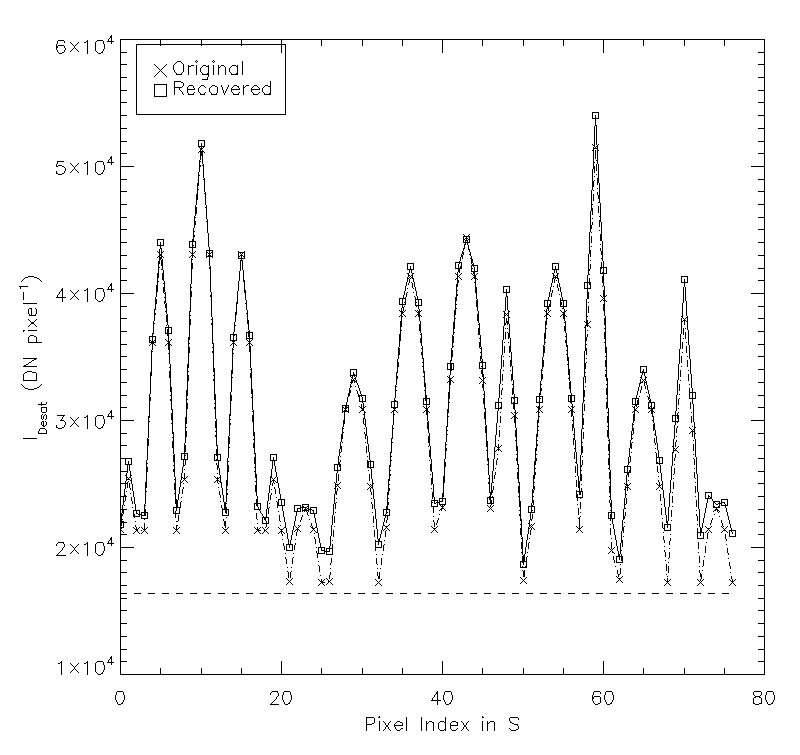}\\
\end{tabular}
\caption{Validation of the correlation/inversion method against synthetic data. Top left panel: the ground-truth image obtained from the synthetic configuration by applying diffraction and blurring. Top right panel: artificially saturated image with Poisson noise added. Bottom left panel: de-saturated image. Bottom right panel: lexicographic representation of the grey-level values for the pixels in the saturated region: ground truth values (dashed line); saturated values (solid line); reconstructed values (black line).}
\label{fig:simulation-2}
\end{figure}

The second simulation created a synthetic saturated dataset starting from the non-saturated real {\em{AIA}} $4096 \times 4096$ image in Figure \ref{fig:simulated-3}, top left panel. The simulation process is implemented by means of the following steps:
\begin{enumerate}
\item The diffraction fringes and the diffusion blurring are eliminated by applying a deconvolution step based on EM (with the global PSF), which provides the map in Figure \ref{fig:simulated-3}, top right panel. 
\item The pixels in the brighter part of the image have been carried over the saturation threshold, to obtain the image in Figure \ref{fig:simulated-3}, middle left panel. To do this we applied the following re-scaling procedure: denoting with ${\tilde{x}}$ the pixel intensity before the rescaling, the rescaled pixel intensity is 
\begin{equation}\label{re-scaling}
{\tilde{x}}_{rescaled}=\left\{
\begin{array}{lc}
\frac{mM - {\tilde{x}}^*}{M - {\tilde{x}}^*} {\tilde{x}} + \frac{M(1-m)}{M-{\tilde{x}}^*} {\tilde{x}}^* & {\tilde{x}} \ge {\tilde{x}}^*\\
{\tilde{x}}  & {\tilde{x}} < {\tilde{x}}^*\\
\end{array}\right.
\end{equation}
where ${\tilde{x}}^*= 0.25 M$, $M$ is the maximum intensity in the image and $m=12$ in the figure.
\item The image in Figure \ref{fig:simulated-3}, middle left panel, is first convolved with the core PSF $A_C$ to construct the prototype, in Figure \ref{fig:simulated-3}, middle right panel, of the ideal image that would be recorded by {\em{AIA}} if no diffraction and no saturation occurred. This is the ground-truth we want to restore.
\item The same image in Figure \ref{fig:simulated-3}, middle left panel, is now convolved with the global PSF $A$ and the result is first affected by Poisson noise and then saturated by setting to $16383$ DN pixel$^{-1}$ all the pixel values over the saturation threshold. The result of this step is in Figure \ref{fig:simulated-3}, bottom left panel.
\item Finally, we apply the de-saturation method described in Section 3 to this image, to obtain the reconstruction in Figure \ref{fig:simulated-3}, bottom right panel.
\end{enumerate}
In order to quantitatively assess the reliability of this procedure we have computed, for different values of $m$, the C-statistics
\begin{equation}\label{c-stat}
C_{stat}(I_F,x) = \frac{2}{\#F} \sum_{i=1}^{\# F} (I_F)_i \log\frac{(I_F)_i}{(A_D^S x + BG_F)_i} + (A_D^S x+BG_F)_i - (I_F)_i~,
\end{equation}
that measures, according to the Kullbach-Leibler topology, the discrepancy between the data in the region of the diffraction fringes and the expectation corresponding to the reconstructed source. Table \ref{table:cstat} shows such $C_{stat}$ values; computes the root mean square errors in $S$ between the ground truth and the de-saturated images; and, finally, compares the sum $T_F$ of the values of the original flux $I_F$ from the pixels in the diffraction fringes, with the sum $T^{pred}_{F}$ of the values of the flux in the diffraction fringes predicted by the reconstructed source, again for different values of $m$ in equation (\ref{re-scaling}).

%$I_{F}^{rec}$ provided by the reconstructed $x^{rec}$, 

%\begin{equation}\label{flux-reconstructed}
%(I_{F}^{rec})_i = (A_D^S x^{rec})_i + (BG_F)_i~~~~~i=1,\ldots,\#F~,
%\end{equation}

\begin{figure}[pht]
\begin{center}
\begin{tabular}{cc}
\includegraphics[width=6.cm]{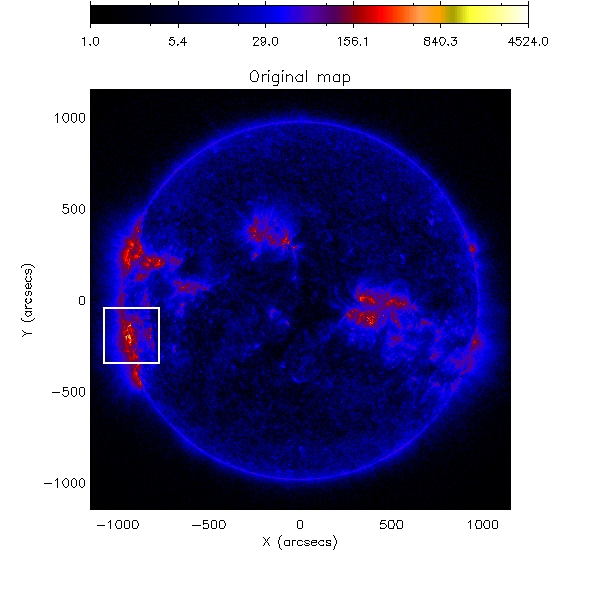} & 
\includegraphics[width=6.cm]{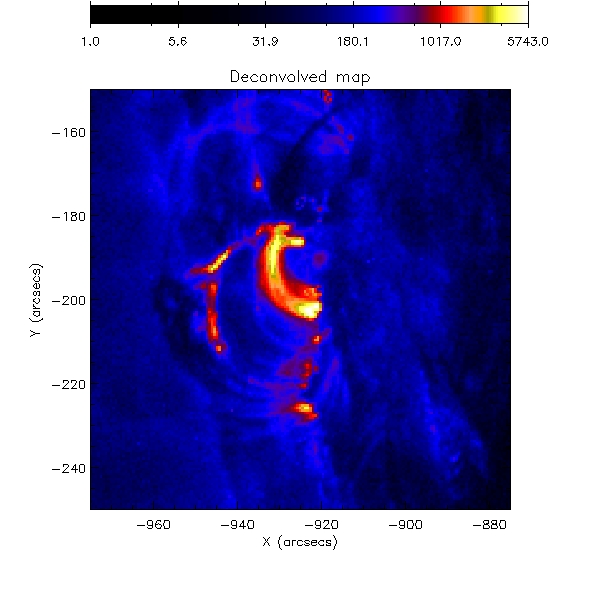} \\
\includegraphics[width=6.cm]{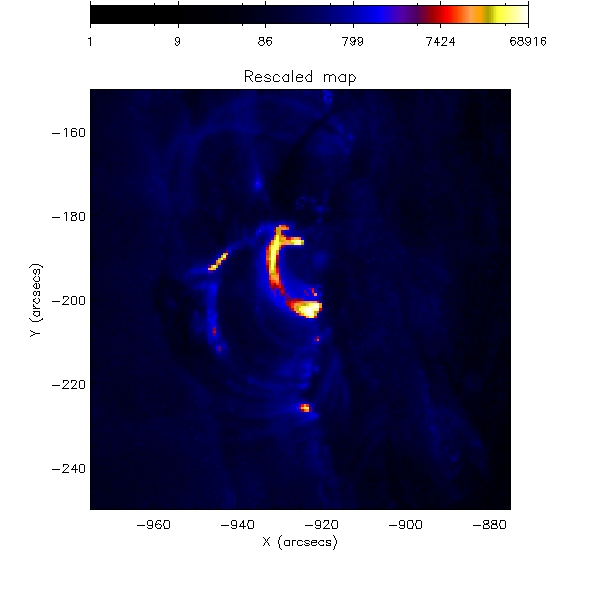} &
\includegraphics[width=6.cm]{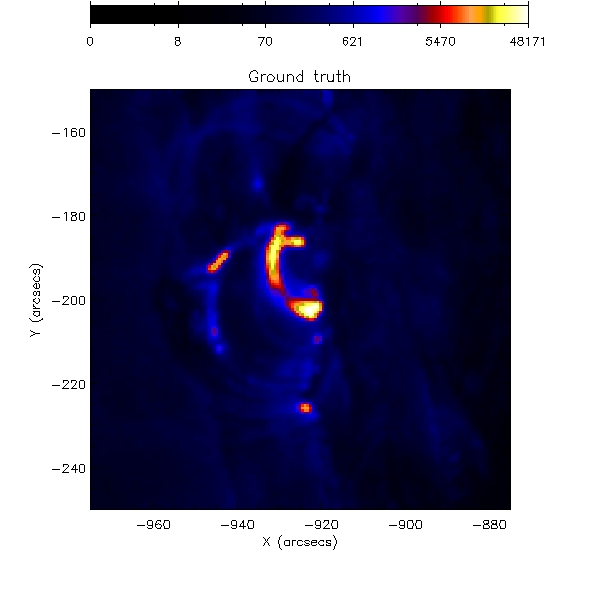} \\ 
\includegraphics[width=6.cm]{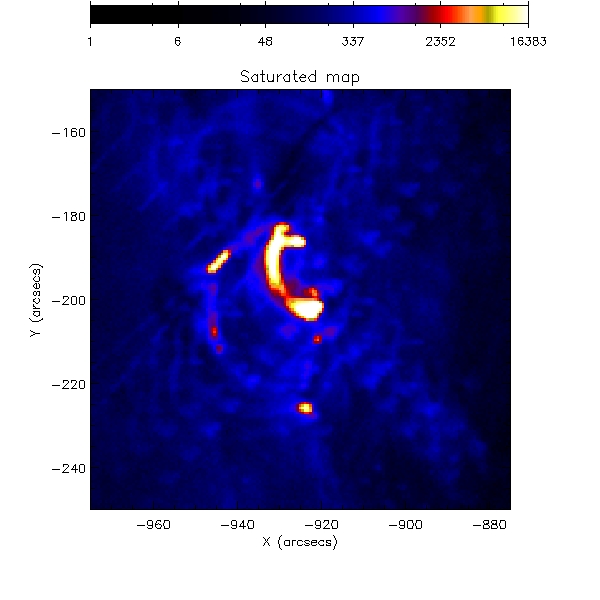} & 
\includegraphics[width=6.cm]{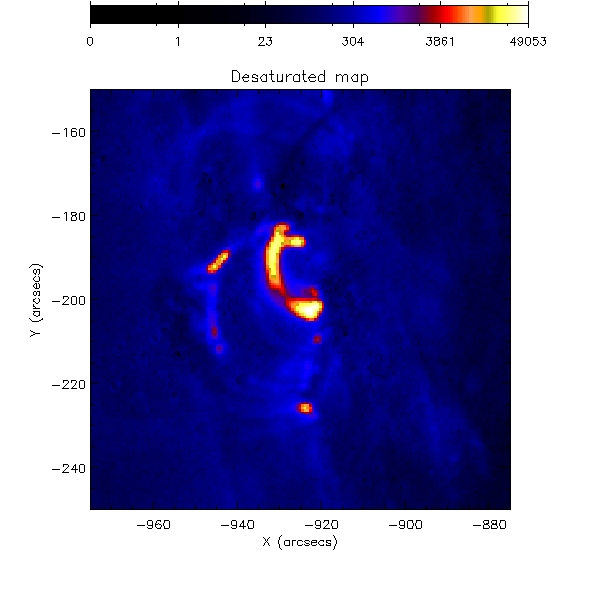} 
\end{tabular}
\caption{The de-saturation method at work in the case of an experimental image synthetically saturated. Top left panel: the original image with highlighted the region to de-convolve. Top right panel: the deconvolved image. Middle left panel: re-scaled image. Middle right panel: prototype of the ideal un-diffracted, un-saturated image. Bottom left panel: saturated image affected by Poisson noise. Bottom right panel: de-saturated image. The images to compare are this last one and the one in the middle right panel.}
\label{fig:simulated-3}
\end{center}
\end{figure}

\begin{table}
\begin{center}
\begin{tabular}{|c|c|c|c|c|}
\hline\hline
$m$ & $C_{stat}$ & $T_F (10^6) $ & $T^{pred}_{F} (10^6)$ & RMS (\%) \\
\hline\hline
$6$ & $4.66$ & $3.18$ & $3.26$ & $4.36$ \\
\hline
$9$ & $3.11$ & $4.33$ & $4.43$ & $1.72$ \\
\hline
$12$ & $2.29$ & $5.99$ & $6.06$ & $0.95$ \\
\hline
$15$ & $2.08$ & $6.78$ & $6.82$ & $0.56$ \\
\hline
$18$ & $1.67$ & $7.51$ & $7.52$ & $0.36$ \\
\hline\hline
\end{tabular}
\caption{The de-saturation method at work in the case of an experimental image synthetically saturated: C-statistic, original and reconstructed flux intensity in correspondence with the diffraction fringes, and root mean square error for several re-scaling intensities.}
\label{table:cstat}
\end{center}
\end{table}

\section{Application to real data}
The processing and, specifically, the de-saturation of {\em{SDO/AIA}} data is truly a big data issue: indeed, {\em{AIA}} data include images of the Sun in $7$ EUV wavelengths every $12$ seconds since February 2010. Here we just focused on a small set of examples and, in particular, we considered the single event occurred on September 6 2011, detected at the four different wavelengths $94 \AA$, $131 \AA$, $171 \AA$, and $193 \AA$, around the same acquisition time, i.e. 22:18:50 UT, 22:19:25 UT, 22:16:48 UT, and 22:16:43 UT, respectively. Figure \ref{fig:real-1} compares the original saturated images with the reconstructed ones. We used the correlation/inversion process described in Section 3 in order to recover the information lost due to primary saturation and the interpolation procedure described in the Appendix to reduce the effect of blooming. The figure visually demonstrates the effectiveness of the de-saturation process. However, in order to provide a quantitative assessment of such effectiveness, in Table {\ref{table:c-stat-real}} we provide the C-statistic values in correspondence of the diffraction fringes for the four different wavelengths.

\begin{table}
\begin{center}
\begin{tabular}{|c|c|c|c|c|}
\hline\hline
wavelength ($\AA$) & time (UT) & $T_F (10^6)$ & $T_{F}^{pred} (10^6)$ & $C_{stat}$ \\
\hline\hline
$94$ & 22:18:50 & $3.44$ & $3.50$ & $2.83$ \\
\hline
$131$ & 22:19:25 & $4.51$ & $4.55$ & $1.97$ \\
\hline
$171$ & 22:16:48 & $6.73$ & $6.76$ & $3.66$ \\
\hline
$193$ & 22:16:43 & $1.16$ & $1.15$ & $8.97$ \\
\hline\hline
\end{tabular}
\caption{The de-saturation method at work in the case of an experimental image recording during the September 6 2011 event. The fluxes are computed in correspondence with the diffraction fringes.}
\label{table:c-stat-real}
\end{center}
\end{table}

Finally, Figure \ref{fig:final} shows the effectiveness of the method in the case of data recorded at a time point when the saturation effect was really dramatic. However the de-saturation method shows once more its power in both the reconstruction of the primary information and in the data interpolation for the blooming region. It is interesting to note that, for this case, the C-statistic value is higher ($\sim 14.03$) than for the cases described in Table {\ref{table:c-stat-real}}. This is most likely due to the fact that our algorithm does not account for the wavelength-dependent dispersion of the {\em{AIA}} PSFs (the formulation of a multi-wavelength approach to this de-saturation problem is in progress).

\begin{figure}[pht]
\begin{center}
\begin{tabular}{cc}
\includegraphics[width=5.cm]{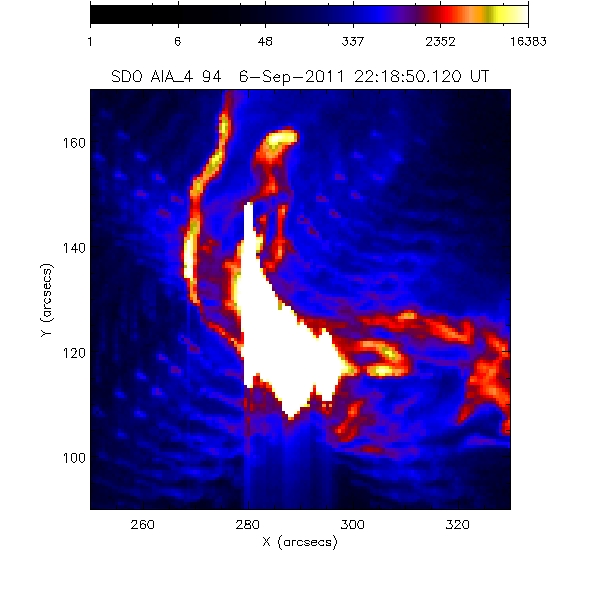} &
\includegraphics[width=5.cm]{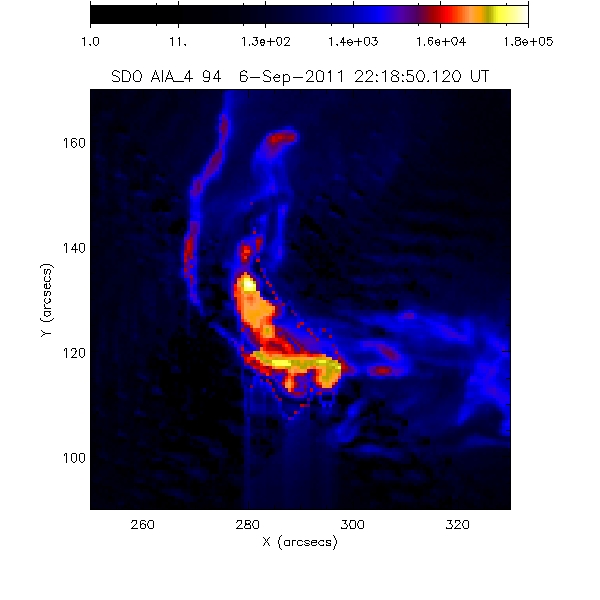} \\
\includegraphics[width=5.cm]{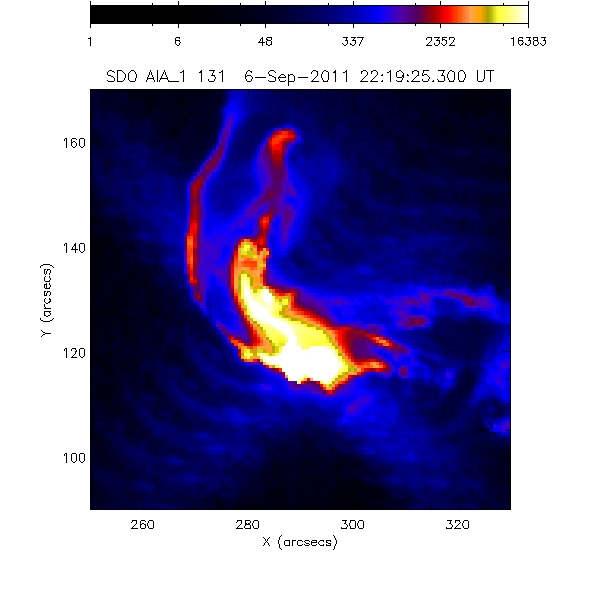} &
\includegraphics[width=5.cm]{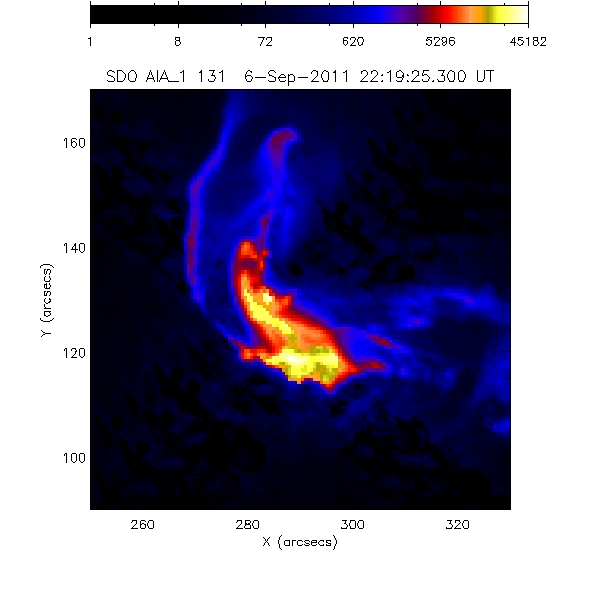} \\
\includegraphics[width=5.cm]{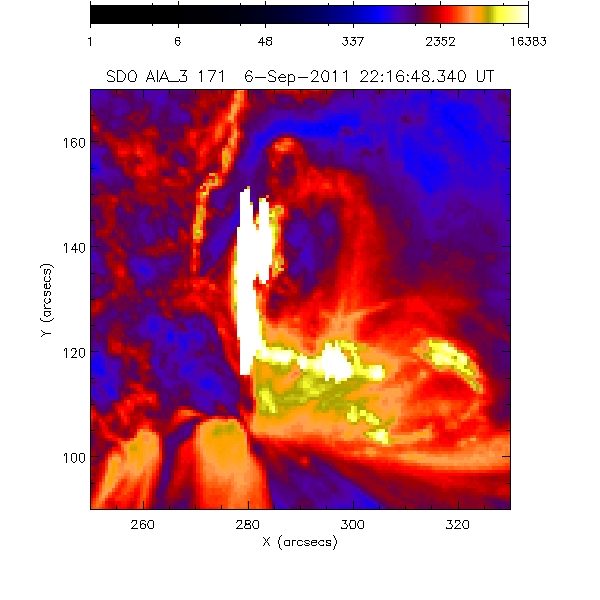} &
\includegraphics[width=5.cm]{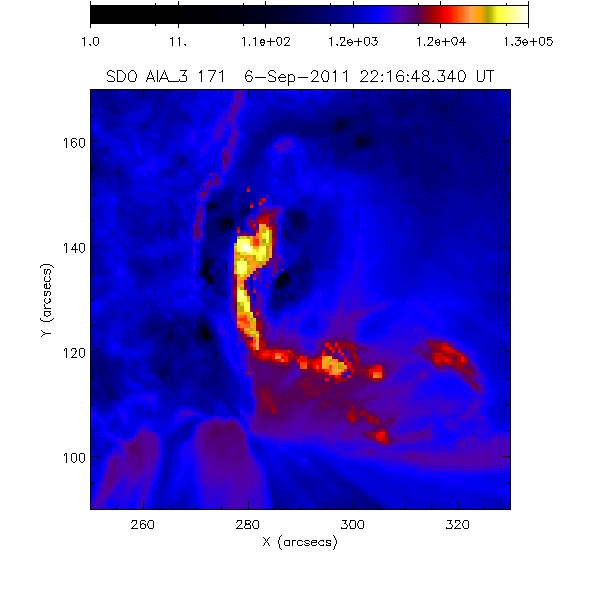} \\
\includegraphics[width=5.cm]{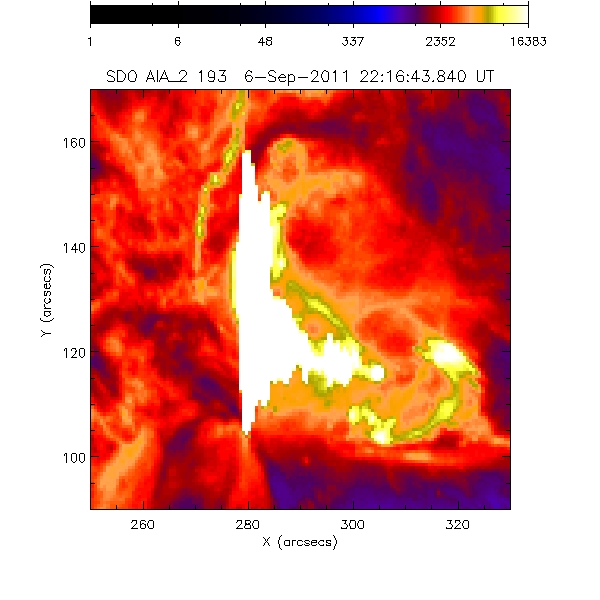} &
\includegraphics[width=5.cm]{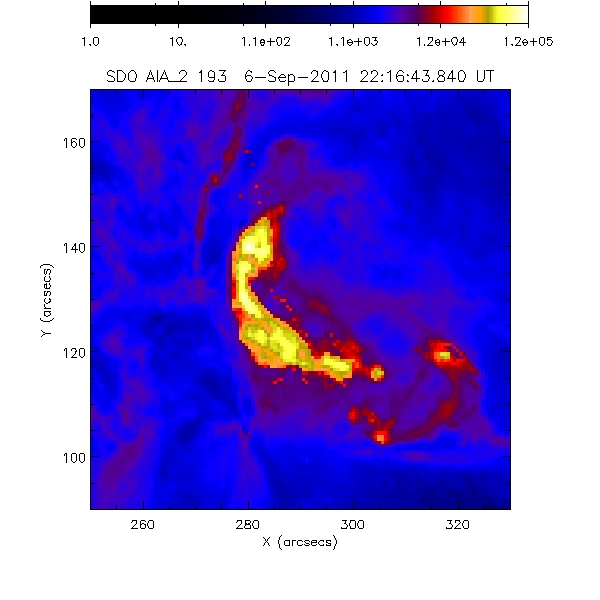} \\
\end{tabular}
\caption{The de-saturation method at work for experimental data recorded during the September 6, 2011 event. First row panels: experimental and de-saturated images for the $94 \AA$ bandwidth at 22:18:50~UT. Second row panels: experimental and de-saturated images for the $131 \AA$ bandwidth at 22:19:25~UT. Third row panels: experimental and de-saturated images for the $171 \AA$ bandwidth at 22:16:48~UT. Fourth row panels: experimental and de-saturated images for the $193 \AA$ bandwidth at 22:16:43~UT.}
\label{fig:real-1}
\end{center}
\end{figure}

\begin{figure}[pht]
\begin{center}
\begin{tabular}{cc}
\includegraphics[width=7.cm]{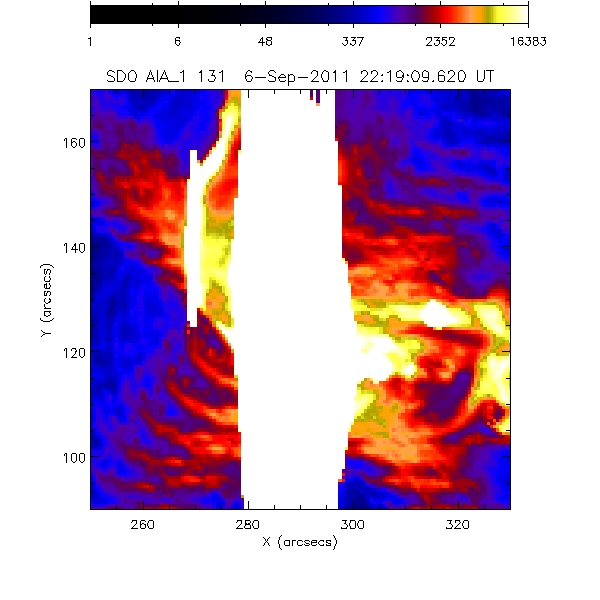} &
\includegraphics[width=7.cm]{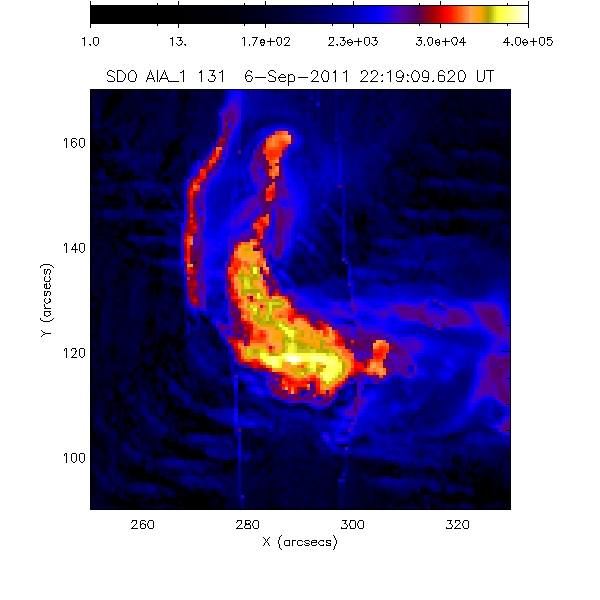} \\
\end{tabular}
\caption{The case of the data collected at 22:19:09 UT, on September 6 2011, by means of the $131 \AA$ passband. In this case the saturation effects are really impressive but the method is still able to recover the information in the saturated region.}
\label{fig:final}
\end{center}
\end{figure}

%\begin{figure}[pht]
%\begin{tabular}{c}
%\includegraphics[width=1.\textwidth]{real-2} 
%\end{tabular}
%\caption{Application to {\em{SDO/AIA}} data recorded during the July, 8 2013 event at three different time ranges, using the $131 \AA$ bandwidth.}
%\label{fig:real-2}
%\end{figure}
%
%\begin{figure}[pht]
%\begin{tabular}{c}
%\includegraphics[width=1.\textwidth]{real-1} 
%\end{tabular}
%\caption{Application to {\em{SDO/AIA}} data recorded during the July, 8 2013 event at three different time ranges, using the $171 \AA$ bandwidth.}
%\label{fig:real-3}
%\end{figure}
%

\section{Conclusions}
{\em{SDO/AIA}} images are strongly affected by both primary saturation and blooming, that may occur at all different wavelengths and acquisition times, even in the case of flaring events characterized by a moderate peak flux. This paper describes the first mathematical description of a robust method for the de-saturation of such images at both a primary and secondary (blooming) level. The method relies on the description of de-saturation in terms of inverse diffraction and utilizes correlation and Expectation-Maximization for the recovery of information in the primarily saturated region. This approach requires to compute a reliable estimate of the image background which, for this paper, has been obtained by means of interpolation in the Fourier space. The knowledge of the background permits to recover information in the blooming region in a very natural way. 

The availability of an automatic procedure for image de-saturation in the {\em{SDO/AIA}} framework may potentially change the extent with which EUV information from the Sun can be exploited. In fact, armed with our computational approach, many novel problems can be addressed in {\em{SDO/AIA}} imaging. For example, one can study the impact of the choice of the model for the diffraction PSF on the quality of the de-saturation. In this paper we used a synthetic estimate of the diffraction PSF provided by Solar SoftWare (SSW) but other empirical or semi-empirical forms can be adopted. Furthermore, this technique can be extended to account for the dependance of the PSF from the passband wavelengths. Finally, the routine implementing this approach is fully automated and this allows the systematic analysis of many events recorded by {\em{AIA}} and their integration with data provided by other missions such as {\em{RHESSI}} \cite{lietal02} or, in the near future {\em{STIX}} \cite{beetal12}.

\section*{Acknowledgements}
This work was supported by a grant of the Italian INdAM - GNCS and by the NASA grant NNX14AG06G.

\section*{Appendix}
The {\em{SDO/AIA}} hardware is equipped with a feedback system that reacts to saturation in correspondence of intense emissions by reducing the exposure time. As a result, for a typical {\em{AIA}} acquisition along a time range of some minutes, during which saturation occurs, the telescope always provides some unsaturated frames that can be utilized to estimate the background. A possible way to realize such an estimate is based on the following scheme. Let us denote with $I_1$ and $I_3$ two unsaturated images acquired at times $t_1$ and $t_3$, respectively and with $I_2$ a saturated image acquired at $t_2$ with $t_1 < t_2 < t_3$ (note that $I_1$, $I_2$, and $I_3$ are normalized at the same exposure time). The algorithm for the estimate of the background is:
\begin{enumerate}
\item $I_1$ and $I_3$ are deconvolved with EM (using the global PSF) to obtain the reconstructions ${\tilde{x}}_1$ and ${\tilde{x}}_3$ (the KL-KKT rule can be used to stop the iterations).
\item Both ${\tilde{x}}_1$ and ${\tilde{x}}_3$ are Fourier transformed by means of a standard FFT-based procedure to obtain $\hat{{\tilde{x}}}_1$ and $\hat{{\tilde{x}}}_3$.
\item A low-pass filter is applied to both $\hat{{\tilde{x}}}_1$ and $\hat{{\tilde{x}}}_3$ to obtain $\hat{{\tilde{x}}}^f_1$ and $\hat{{\tilde{x}}}^f_3$, respectively.
\item For each corresponding pair of pixels in $\hat{{\tilde{x}}}^f_1$ and $\hat{{\tilde{x}}}^f_3$ that are not negligible, an interpolation routine is applied, both for the real and imaginary part. This provides $\hat{{\tilde{x}}}^{int}_{2}$ in correspondence of $t=t_2$.
\item The resulting vector $\hat{{\tilde{x}}}^{int}$ is Fourier inverted to obtain the interpolated reconstruction ${\tilde{x}}^{int}_2$.
\item The core PSF $A_C$ is finally applied to ${\tilde{x}}^{int}_2$ to obtain $I^{int}_2$ in the image domain.
\end{enumerate} 
$I^{int}_2$ is a reliable estimate, for time $t_2$, of the background $BG=A_C {\tilde{x}}$ introduced in Section 2. On the other hand, a reliable estimate of the image in the bloomed region is provided by the restriction $BG_B$ of $I^{int}_2$ onto $B$ determined as in Section 2. We finally observe that, in this algorithm, the interpolation step is applied in the Fourier domain because, after filtering, a lot of pixels are negligible and therefore the computational burden of the procedure is notably decreased. 

\section*{References}

%\begin{figure}[!h]
%\includegraphics[scale=.8]{images/08set2002/solution_kkt_it=146.ps}
%\includegraphics[scale=.8]{images/03nov2003/solution_kkt_it=154.ps}\\
%\caption{EM reconstructions of the September 8 2002 (left) and the November 3 2003 (right) events stopped with the KL-KKT principle for $\tau=1$.}
%\label{fig:08set2002_03nov2003}
%\end{figure}

%\bibliography{stoppingRules}

\begin{thebibliography}{9999}
\bibitem{pethch12} Pesnell W D, Thompson B J and Chamberlin P C 2012 The Solar Dynamics Observatory (SDO) {\em{Sol. Phys.}} {\bf{275}} 3
\bibitem{scetal12} Scherrer P H et al 2012 The Helioseismic and Magnetic Imager (HMI) investigation for the Solar Dynamics Observatory (SDO) {\em{Solar Phys.}} {\bf{275}} 207
\bibitem{leetal12} Lemen J R et al 2012 The {\em{Atmospheric Imaging Assembly (AIA)}} on the {\em{Solar Dynamics Observatory (SDO)}} {\em{Solar Phys.}} {\bf{275}} 17
\bibitem{woetal12} Woods T N et al 2012 Extreme Ultraviolet Variability Experiment (EVE) on the Solar Dynamics Observatory (SDO): Overview of science objectives, instrument design, data products, and model developments {\em{Solar Phys.}} {\bf{275}} 115
\bibitem{gretal11} Grigis P, Su Y and Weber M 2012 {\em{AIA PSF Characterization and Image Deconvolution}}, SDO Documentation (http://www.lmsal.com/sdodocs)
\bibitem{boetal12} Boerner P et al 2012 Initial calibration of the Atmospheric Imaging Assembly (AIA) on the Solar Dynamics Observatory (SDO) {\em{Solar Phys.}} {\bf{275}} 41
\bibitem{poetal13} Poduval B, DeForest C E, Schmelz J T and Pathak S 2013 Point-spread functions for the Extreme-ultraviolet channels of SDO/AIA telescopes {\em{Astrophys. J.}} {\bf{765}} 144
\bibitem{makl97} Martinez P and Klotz A 1997 {\em{A Practical Guide to CCD Astronomy}} (Cambridge: Cambridge University Press)
\bibitem{beba78} Bekefi G and Barrett A H 1978 {\em{Electromagnetic Vibrations, Waves, and Radiation}} (Boston: MIT Press)
\bibitem{rakrli11} Raftery C L, Krucker S and Lin R P 2011 Imaging spectroscopy using {\em{AIA}} diffraction patterns in conjunction with {\em{RHESSI}} and {\em{EVE}} observations {\em{Astrophys. J. Lett.}} {\bf{743}} L27
\bibitem{sctopi14} Schwartz R A, Torre G and Piana M 2014 Systematic de-saturation of images from the {\em{Atmospheric Imaging Assembly}} in the {\em{Solar Dynamics Observatory}} {\em{Astrophys. J. Lett.}} {\bf{793}} L23
\bibitem{shva82} Shepp L A and Vardi Y 1982 Maximum likelihood reconstruction for emission tomography IEEE {\em{Trans. Med. Imaging}} {\bf{1}} 113
\bibitem{bepi14} Benvenuto F and Piana M 2014 Regularization of multiplicative iterative algorithms with non-negative constraint {\em{Inverse Problems}} {\bf{30}} 035012
\bibitem{beetal13} Benvenuto F, Schwartz R, Piana M and Massone A M 2013 Expectation Maximization for hard X-ray count modulation profiles {\em{Astron. Astrophys.}} {\bf{555}} A61
\bibitem{depierro} De Pierro A R 1987 On the convergence of the iterative image space reconstruction algorithm for volume ECT IEEE {\em{Trans. Med. Imaging}} {\bf{6}} 2
\bibitem{lietal02} Lin R P et al 2002 The Reuven Ramaty High Energy Solar Spectroscopic Imager (RHESSI) {\em{Solar Phys.}} {\bf{210}} 3
\bibitem{beetal12} Benz A et al 2012 The Spectrometer Telescope for Imaging X-rays on board the Solar Orbiter mission {\em{Proc.}} SPIE Conference on Space Telescopes and Instrumentation 2012 - Ultraviolet to Gamma Ray {\bf{8443}} 84433L

\end{thebibliography}

\end{document}